\tikzstyle{every node} = [draw, fill=white, circle, inner sep=0pt, minimum size=5pt]
\tikzstyle{d} = [very thick]
\tikzstyle{n} = [draw=none, rectangle, inner sep=6pt] 
\tikzstyle{i} = [draw, fill=black, circle, inner sep=0pt, minimum size=5pt] 
\newcommand{\N}{\mathcal{N}}
\newcommand{\A}{\mathcal{A}}
\newcommand{\F}{\mathcal{F}}
\newcommand{\simm}{{\sim}}
\newcommand{\hatt}{\hat{\hspace{0.2cm}}}
\newcommand{\m}{\mathbf}
\newcommand{\Wk}{\mathcal W}
\newcommand{\RWkRA}{\mathsf{RWkRA}}
\newcommand{\RwkRA}{\mathsf{RwkRA}}
\newcommand{\wkRA}{\mathsf{wkRA}}
\newcommand{\RRA}{\mathsf{RRA}}
\newcommand{\RA}{\mathsf{RA}}
\spnewtheorem{definition}{Definition}{\bfseries}{\rmfamily}
\title{Representable and diagonally representable weakening relation algebras\thanks{This work was supported by the Engineering and Physical Sciences Research Council EP/S021566/1}}
\author{Peter Jipsen \inst{1}\orcidID{0000-0001-8608-808X} \and Ja{\v s} {\v S}emrl  \inst{2}  \orcidID{0000-0001-7440-8867}}
\institute{Chapman University \email{jipsen@chapman.edu} \url{https://www1.chapman.edu/~jipsen/}
\and UCL (University College London) \email{j.semrl@cs.ucl.ac.uk} \url{http://www0.cs.ucl.ac.uk/staff/jsemrl/}}
\begin{document}

\maketitle

\begin{abstract}
    A binary relation defined on a poset is a weakening relation if the partial order acts as a both-sided compositional identity. This is motivated by the weakening rule in sequent calculi and closely related to models of relevance logic. For a fixed poset the collection of weakening relations is a subreduct of the full relation algebra on the underlying set of the poset. We present a two-player game for the class of representable weakening relation algebras akin to that for the class of representable relation algebras. This enables us to define classes of abstract weakening relation algebras that approximate the quasivariety of representable weakening relation algebras. We give explicit finite axiomatisations for some of these classes. We define the class of diagonally representable weakening relation algebras and prove that it is a discriminator variety. We also provide explicit representations for several small weakening relation algebras.

    \keywords{weakening relation algebra \and relevance frames \and Sugihara monoids \and representation games}
\end{abstract}

\section{Introduction}
The \emph{full algebra of binary relations on} $X$ is 
\[
\mathbf{Rel}(X)=(\mathcal{P}(X^{2}),\cap,\cup,\emptyset,\top,;,id_{X},\neg,^{\smallsmile})
\]
where $\top=X^{2}$,
$R{;}S$ is the composition of $R$, $S$ , $\neg R=X^{2}\setminus R$, and
$R^{\smallsmile}=\{(x,y)\mid(y,x)\in R\}$.
The class \emph{\textsf{RRA}} of \emph{representable relation algebras} $=\mathbb{SP}\{\mathbf{Rel}(X)\mid X$
is a set$\}$.
Tarski \cite{tarski1956contributions} proved that $\RRA$ is a variety and Monk \cite{Mon1964}
proved that $\RRA$ is not finitely axiomatisable. For more details see the books by Givant \cite{Giv2017a}, \cite{Giv2017} and Maddux \cite{Mad2006}.

The set of \emph{weakening relations} on a poset $\m X=(X,\le)$ is $\Wk(\m X)=\{R\subseteq X^{2}\mid{\le}{;}R{;}{\le}=R\}$.
The \emph{full algebra of weakening relations on a poset} $\m X$
is $$\mathbf{wk}(\m X)=(\Wk(X,\le),\cap,\cup,\emptyset,\top,;,1,\sim)$$
where $1={\le}$ and ${\sim}R=\neg R^{\smallsmile}$ is the complement-converse operation.
The class of \textbf{representable weakening relation algebras} is
$$\RwkRA=\mathbb{SP}\{\mathbf{wk}(X,\le)\mid(X,\le) \mbox{ is a
poset}\}.$$ 

Weakening relations are the analogue of binary relations
when the category \textbf{Set} of sets and functions is replaced by
the category \textbf{Pos} of partially ordered sets and order-preserving
functions.
Since sets can be considered as discrete posets (i.e. antichains, ordered
by the identity relation), \textbf{Pos} contains \textbf{Set} as a
full subcategory, which implies that weakening relations are a substantial
generalisation of binary relations.
However, weakening relations do not allow $\neg$ or $^{\smallsmile}$
as operations.

They have applications in sequent calculi \cite{GJ2017}, 
quasi-proximity lattices/spaces \cite{smyth1992stable}
, order-enriched categories \cite{kurz2016relation}
, mathematical
morphology \cite{Ste2015}, and program semantics, e.g. via separation logic \cite{reynolds2002separation}
.

The closely related algebras $\m{Wk}(\m X)$ are defined as the expansions of $\m{wk}(\m X)$ by the Heyting implication $R\to S=\{(x,y)\mid \forall u, v(u \le x\ \&\ y \le v \ \&\ u R v \Rightarrow u S v)\}$. The $\mathbb{SP}$-closure of these algebras is denoted by $\RWkRA$ and has been studied in \cite{GJ2020}, \cite{GJ2020a}, \cite{Jip2017}, \cite{Ste2012}, \cite{Ste2015}. It is a discriminator variety that has $\RRA$ of representable relation algebras as a proper subvariety. The algebras in $\RWkRA$ are generalised bunched implication algebras, and the algebras in $\RwkRA$ are all the subreducts of algebras in $\RWkRA$, hence $\RwkRA$ is a quasivariety. We show that it is not a variety, but with respect to representability the two classes behave the same way.

In Section 2 we define a representation game for $\RwkRA$ (which can be extended to a game for $\RWkRA$) and use it to give an explicit universal axiomatisation for the class. Section 3 defines (Kripke) frames for weakening relation algebras and adapts the game to this setting. From an $n$-pebble version of this frame game we define a sequence of classes $\wkRA_n$ that approximate $\RwkRA$ from above, similar to the sequence $\RA_n$ that converges the $\RRA$. In the next section we find finite axiomatisation for $\wkRA_2$ and $\wkRA_3$. In Section 5 we define the class of representable diagonal weakening relation algebras and show that is a discriminator variety. Finally, in the last section we show that all associative algebras in $\wkRA_3$ with 6 elements or fewer are representable.

\section{Representation game}

In this section we present a representation game for weakening relation algebras similar to those defined for relation algebras, defined in \cite{hirsch2002relation}. We begin by defining some notation.

\begin{definition}
    A \emph{bounded cyclic involutive unital distributive lattice-ordered magma}  $\A = (A,\cdot,+,\bot,\top,;,1,\sim)$ is an algebra such that
    \begin{enumerate}
        \item $(A,\cdot, +,\bot,\top)$ is a bounded distributive lattice
        \item $(s+t);(u + v) = s;u + s;v + t;u + t;v$
        \item $s{;}\bot=\bot=\bot{;}s$
        \item $s{;}1 = s=1{;}s$
        \item $\simm(\simm s) = s$
        \item $\simm(s \cdot t)=\simm s + \simm t$
    \end{enumerate}
     for all $s,t,u,v \in A$. A \emph{representation} of $\A$ is an injective homomorphism $h:\A\to \m{wk}(\m X)$ for some poset $\m X=(X,\le)$ such that $h(\top)$ is an equivalence relation on $X$.
\end{definition}

    Note that $s \leq t$ if and only if $s + t = t$, or equivalently $\simm s \cdot \simm t = \simm t$ which can be rewritten as $\simm t \leq \simm s$, hence $\sim$ is order reversing. The adjective ``cyclic'' is included in the name to contrast it to the non-cyclic general case the are two unary operations $\sim,-$ in the language that satisfy $\simm{-}s=s={-}\simm s$. In the cyclic case $\simm, {-}$ have the same interpretation.

Distributive lattice-ordered magmas are abbreviated as $d\ell$-magmas.
Let $\A$ be a bounded cyclic involutive unital $d\ell$-magma. Additionally we define $0 = \simm 1$.

\begin{definition}
    A \emph{network (for $\A$)} is a tuple $\N = (N, \lambda)$ where $N$ is a set of \emph{nodes} and $\lambda: N^2 \to \wp(\A)$ is a \emph{labelling function} such that for all $x,y \in N$, $1 \in \lambda(x,x)$ and $\top \in \lambda(x,y)$.
    Such a network 
    is \emph{consistent} if and only if for all $x,y \in N$ we have that
    $$\lambda(x,y) \cap \{\simm a \mid a \in \lambda(y,x)\} = \emptyset.$$
    A network $\N = (N, \lambda)$ is a \emph{prenetwork} of $\N' = (N', \lambda')$ -- denoted $\N \subseteq \N'$ -- if and only if $N \subseteq N'$ and for all $x,y \in N$ we have $\lambda(x,y) \subseteq \lambda'(x,y)$.
\end{definition}

    Observe that the prenetwork predicate is a partial order and that inconsistency is inherited from prenetworks.

We now have the tools to define a two player game and prove that the existence of a winning strategy for one of the players coincides with $\A$'s membership in the class of $\RWkRA$.

\begin{definition}
    An \emph{$n$-round representation game}, denoted $\Gamma_n(\A)$, for some $n \leq \omega$ is a two player game played between the challenger $\forall$ (Abelard) and the responder $\exists$ (H\'{e}lo\"{\i}se) over $n+1$ moves. After the $i$th move for $0 \leq i \leq n$, $\exists$ will return a network $\N_i$ such that $\N_0 \subseteq \N_1 \subseteq ... \subseteq \N_n$. The game is won by $\forall$ if $\exists$ returns an inconsistent network. Otherwise $\exists$ wins.

    On the \emph{initialisation move} $\forall$ picks a pair of elements $a \nleq b \in \A$ and $\exists$ must return a network $\N_0$ with  some $(x,y) \in N_0^2$ such that $a \in  \lambda(x,y)$ and $\simm b \in  \lambda(y,x)$.

    On the $i$th move for $0 < i \leq n$, $\forall$ may challenge $\exists$ with any of the following four moves.

    \begin{description}
        \item[join move:] $\forall$ picks $x,y \in N_{i-1}$, some $a \in {\lambda}_{i-1}(x,y)$, and some $b,c \in \A$ such that $a \leq b+c$. $\exists$ must return a $\N_i$ with $b \in {\lambda}_i(x,y)$ or $c \in {\lambda}_i(x,y)$.
        \item[involution  move:] $\forall$ picks $x,y \in N_{i-1}$ and some $a,b \in \A$ such that $b = \simm a$.\linebreak $\exists$ must return a $\N_i$ with $a \in {\lambda}_i(x,y)$ or $b \in {\lambda}_i(y,x)$.
        \item[composition move:] $\forall$ picks $x,y,z \in N_{i-1}$ and $a \in {\lambda}_{i-1}(x,y), b \in {\lambda}_{i-1}(y,z)$. $\exists$ must return a $\N_i$ with $ c \in {\lambda}_i(x,z)$ where $c = a;b$.
        \item[witness move:] $\forall$ picks $x,y \in N_{i-1}$, $a \in {\lambda}_{i-1}(x,y)$, and $b,c \in \A$ such that $a = b;c$. $\exists$ must return a $\N_i$ with some $z \in N_i$ such that $ b \in {\lambda}_i(x,z), c \in {\lambda}_i(z,y)$.
    \end{description}
\end{definition}

\begin{proposition}\label{prop:repgame}
    $\A$ is representable if and only if $\exists$ has a winning strategy for $\Gamma_\omega(\A)$.
\end{proposition}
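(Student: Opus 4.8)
The plan is to run the standard step-by-step (back-and-forth) argument used for $\RRA$ (see \cite{hirsch2002relation}), adapted to weakening relations.

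\smallskip
\noindent($\Rightarrow$) Suppose $h\colon\A\to\m{wk}(\m X)$ is a representation and set $E:=h(\top)$, an equivalence relation on $X$. I would have $\exists$ maintain, alongside each network $\N_i=(N_i,\lambda_i)$, a map $v_i\colon N_i\to X$ with $v_i\subseteq v_{i+1}$ satisfying the invariant $a\in\lambda_i(x,y)\Rightarrow(v_ix,v_iy)\in h(a)$. Any such network is consistent: $a\in\lambda(x,y)$ and $\simm a\in\lambda(y,x)$ would force $(v_ix,v_iy)\in h(a)$ and $(v_iy,v_ix)\in h(\simm a)$, i.e.\ $(v_ix,v_iy)\notin h(a)$. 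One then checks $\exists$ can always extend $(\N_i,v_i)$ keeping the invariant: the initialisation move uses $h(a)\not\subseteq h(b)$ (choose $(p,q)\in h(a)\setminus h(b)$, so $(q,p)\in h(\simm b)$); the join and involution moves follow from $h$ being a lattice homomorphism that commutes with $\simm$; the composition move from $h(a);h(b)=h(a;b)$; and the witness move from $h(a)=h(b);h(c)$, the required intermediate point of $X$ being added as a fresh node whose $\top$-labels are legitimate because $E$ is an equivalence relation. Thus $\exists$ survives all $\omega$ rounds.

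\smallskip
\noindent($\Leftarrow$) Fix a winning strategy for $\exists$. For each pair $a\not\le b$ of $\A$, let $\forall$ open with it and thereafter schedule his moves (a routine bookkeeping argument) so that every possible join, involution, composition, and witness move on the nodes created so far is eventually played. The play produces $\N_0\subseteq\N_1\subseteq\cdots$, all consistent, so the union $\N^{a,b}=(N,\lambda)$ is consistent (inconsistency is finitely witnessed) and saturated under all four moves. Define $x\preceq y\iff 1\in\lambda(x,y)$: this is reflexive ($1\in\lambda(x,x)$) and transitive (composition move, $1;1=1$); quotient $N$ by the induced equivalence to get a poset $\m X^{a,b}$, and put $h^{a,b}(c)=\{(x,y):c\in\lambda(x,y)\}$, well defined on the quotient since $1;c;1=c$. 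I claim $h^{a,b}$ is a homomorphism into $\m{wk}(\m X^{a,b})$. Labels are upward closed: apply the join move to $c\le e+\bot$, using that $\bot$ lies in no label (from the derivable identity $\simm\top=\bot$ and the fact that $\top$ lies in every label, so $\bot$ in a label would break consistency); this gives one inclusion each for the preservation of $\cdot$ and of $+$, the reverse inclusions coming from the join move (for $+$) and from axiom~(6) together with the involution move and consistency (for $\cdot$); preservation of $\simm$ uses the involution move and consistency; preservation of $;$ uses the composition move for one inclusion and the witness move for the other; each $h^{a,b}(c)$ is a weakening relation by the composition move and $1;c;1=c$; $h^{a,b}(1)$ is the order of $\m X^{a,b}$; and $h^{a,b}(\top)$ is the full relation, an equivalence relation. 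The initialisation move gives a pair in $h^{a,b}(a)\setminus h^{a,b}(b)$, so $h^{a,b}$ separates $a$ from $b$. Then $h=\prod_{a\not\le b}h^{a,b}$ is a homomorphism of $\A$ into a product of algebras $\m{wk}(\m X^{a,b})$ that is injective (any $s\neq t$ satisfies $s\not\le t$ or $t\not\le s$), witnessing $\A\in\RwkRA$; note that associativity of $;$, and indeed every quasi-identity holding in all $\m{wk}(\m X)$, is automatically forced on $\A$ once $\exists$ wins, since $h$ is injective.

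\smallskip
I expect the bulk of the work, and the only genuinely new points compared with the classical $\RRA$ game, to lie in ($\Leftarrow$): checking that the structure read off a saturated consistent network really is $\m{wk}$ of a poset — that $\preceq$ may be taken antisymmetric, that $1$ is realised by the order, that $\top$ is realised by an equivalence relation, and that $\simm$ comes out as complement--converse relative to it — together with the scheduling that makes $\N^{a,b}$ saturated and the assembly of the $h^{a,b}$ into a single injective representation.
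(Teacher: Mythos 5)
Your overall architecture matches the paper's proof: forward direction by reading labels off an actual representation, backward direction by scheduling $\forall$'s moves so the limit network is saturated, quotienting by the equivalence induced by $1$-labels, and reading off homomorphisms $h^{a,b}$ that separate each pair $a\nleq b$ (the paper glues these by disjoint union rather than your product, but that difference is immaterial once you note that the product of the $\m{wk}(\m X^{a,b})$ embeds into $\m{wk}$ of the disjoint union with $\top$ an equivalence relation, which is what the paper's notion of ``representable'' literally asks for). Your handling of $\bot$ via the derivable identity $\simm\top=\bot$ is a clean alternative to the paper's propagation argument.

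The genuine gap is the cardinality of $\A$. Your claim that $\forall$ can ``schedule his moves so that every possible join, involution, composition, and witness move on the nodes created so far is eventually played'' is only true when $\A$ is countable: a play of $\Gamma_\omega(\A)$ has only $\omega$ rounds, while for uncountable $\A$ there are uncountably many join moves alone (all pairs $b,c$ with $t\le b+c$ for a given label $t$), so no single play can be saturated and the limit network need not support your verification that, e.g., labels are upward closed or that $\simm$ and $;$ are correctly represented. The proposition is stated for arbitrary $\A$, so this case must be addressed; the paper does it by proving the countable case as you do and then invoking the downward L\"owenheim--Skolem theorem together with the fact that the representation class is pseudoelementary (equivalently, one can use the axiomatisation $\Sigma$ of Proposition~\ref{prop:fmlasgame}/Corollary~\ref{cor:fmlasgame}, or a compactness/ultraproduct argument), to transfer existence of a winning strategy to representability for uncountable algebras. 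Without some such step your argument establishes the equivalence only for countable $\A$.
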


\begin{proof}
    If $\A$ is representable, then $\exists$ can take some representation $h$ over $X$. Let $a \nleq b$ be the pair played on initialisation move. There will exist some maximal $X' \subseteq X$ such that $\exists x,y \in X': (x,y) \in h(a)\setminus h(b)$ and $\forall z,w \in X': (z,w) \in h(\top)$. On initialisation move, $\exists$ can return the network $\N = (X', \lambda)$ where $\lambda(x,y) = \{c \in \A \mid (x,y) \in h(c)\}$. Because $h$ preserves all the operations in the language, all moves $\forall$ may call are trivially responded to by returning the same network after every move.

    If $\A$ is countable then $\forall$ can schedule his moves in a way that every move will be called eventually. Let $\N^{a,b}_0, \N^{a,b}_1, \N^{a,b}_2, ...$ be the networks during an $\exists$-winning play of $\Gamma_\omega(\A)$ where $\forall$ scheduled his moves in such a way and the initialisation move was called for the pair $a \nleq b$. Define $N^{a,b}_\omega$ as $\{x \mid \exists i < \omega: j \geq i \Rightarrow x \in N^{a,b}_i\}$, $\lambda^{a,b}_\omega(x,y)$ as $\{c \mid \exists i < \omega: j \geq i \Rightarrow (x,y \in N^{a,b}_j \wedge c \in \lambda^{a,b}_j(x,y))\}$, and a relation $\equiv$ as $\{(x,y) \in (N^{a,b}_i)^2 \mid 1 \in \lambda^{a,b}_\omega(x,y), 1 \in \lambda^{a,b}_\omega(y,x) \}$. It is symmetric by definition, reflexive because networks are defined as having $1 \in \lambda^{a,b}_\omega(x,x)$ and transitive because all composition moves were called eventually and $1;1=1$. Therefore, we can define $h^{a,b}: \A \to \big((N_\omega/{\equiv})^2\big)$ where for all $c \in \A$ we have $h^{a,b}(c) = \{([x]_\equiv,[y]_\equiv) \mid c \in \lambda^{a,b}_\omega(x,y)\}$.

    Because of initialisation there exists a pair $x,y \in N^{a,b}_\omega$ with $a \in \lambda^{a,b}_\omega(x,y)$ and as the network remains consistent and $\simm b \in \lambda^{a,b}_\omega(y,x)$, we have $b \notin \lambda^{a,b}_\omega(x,y)$. Because every composition move was called and $1$ is the identity for $;$, we have $b \notin \lambda^{a,b}_\omega(x',y')$ for all $x' \in [x]_\equiv, y' \in [y]_\equiv$. Thus $([x]_\equiv, [y]_\equiv)$ is a pair that ensures that $h^{a,b}(a) \not \subseteq h^{a,b}(b)$.

    $\top$ is represented by $h^{a,b}$ as the top relation by the definition of a network and $\bot$ is represented as the empty relation because if there was a pair $(x,y)$ with $\bot \in \lambda_\omega^{a,b}(x,y)$ then by a series of composition moves every pair would include $\bot$ in its label. That would mean that the initialisation pair of points would include both $a$ and $b$ in its label.

    The join move ensures that join is represented correctly by $h^{a,b}$, namely the join move ensures $h^{a,b}(c)+h^{a,b}(d) \subseteq h^{a,b}(c+d)$ because $c \leq (c+d) + (c+d)$ and $h^{a,b}(c)+h^{a,b}(d) \supseteq h^{a,b}(c+d)$ because $c+d \leq c+d$ and thus a pair in $h^{a,b}(c+d)$ will also be in $h^{a,b}(c)$ or $h^{a,b}(d)$.

    Because an inconsistent network is never introduced we have that if $\simm c \in \lambda^{a,b}_\omega(x,y)$ then $c \not\in \lambda^{a,b}_\omega(x,y)$. Because all composition moves are eventually called and $1$ is the identity for $;$ that applies to the all the points in the relevant equivalence classes of $\equiv$ and $h^{a,b}(\simm c) \subseteq \simm h^{a,b}(c)$. For $h^{a,b}(\simm c) \supseteq \simm h^{a,b}(c)$ if $([x]_\equiv,[y]_\equiv) \notin h^{a,b}(\simm c)$ then when the involution move was called for $(y,x)$ and $c$, $\exists$ chose to have $c \in \lambda^{a,b}_\omega(y,x)$.

    If $([x]_\equiv,[y]_\equiv) \in h^{a,b}(c)$ and $([y]_\equiv,[z]_\equiv) \in h^{a,b}(d)$ then by composition moves of $a;1;b = a;b$ we have $([x]_\equiv,[z]_\equiv) \in h^{a,b}(c;d)$ so $h^{a,b}(c);h^{a,b}(d) \subseteq h^{a,b}(c;d)$. For $h^{a,b}(c);h^{a,b}(d) \supseteq h^{a,b}(c;d)$, assume $([x]_\equiv,[z]_\equiv) \in h^{a,b}(c;d)$ and without loss $c;d \in \lambda_\omega^{a,b}(x,z)$. Because the relevant witness move was called there will exist a $y$ such that $([x]_\equiv,[y]_\equiv) \in h^{a,b}(c)$ and $([y]_\equiv,[z]_\equiv) \in h^{a,b}(d)$.

    $h^{a,b}(1)$ is reflexive by the definition of a network, antisymmetric as the base was quotiented by $\equiv$, and transitive as $1;1=1$ and all the composition moves were called eventually. Furthermore as the composition is represented by $h^{a,b}$ and $1$ being the identity for composition, it is the case that for all $c \in \A$, $h^{a,b}(c)$ is a weakening relation with respect to $1$.

    $h^{a,b}$ is thus a homomorphism for $\A$ discriminating $a \nleq b$. Thus let $h(c)$ for all $c \in \A$ be the disjoint union $\dot\bigcup_{a \nleq b \in \A}h^{a,b}(c)$. Because $h$ is a homomorphism that discriminates all $a \nleq b$ pairs, it is a representation.

    This generalises to uncountable algebras by the downward L{\" o}wenheim Skolem Theorem since $\RWkRA$ is a pseudoelementary class.
\qed\end{proof}

Next we show that the existence of a winning strategy for $\exists$ can be expressed by a universal first-order sentence. For this result we define the following concepts.

\begin{definition}
    A \emph{term network} is a network $\N = (N,\lambda)$ where $N$ is a finite set of nodes and $\lambda$ is a labelling function that maps every pair of nodes to a finite set of terms. We also require that for all $x,y \in N$, $1 \in \lambda(x,x)$ and $\top \in \lambda(x,y)$.

    For every term network $\N = (N,\lambda)$ we define a network $\N^{+,x,y,t}=(N \cup \{y\}, \lambda^\ell)$ where $x \in N, y \in N \uplus \{x^+\}$ (for some new node $x^+$), $t$ is a term in the language of $\RWkRA$ and for all $z,w\in N \uplus \{x^+\}$. 
$$\lambda^\ell(z,w) = \begin{cases}
        \{1,\top\}  & \text{if } x^+=z=w\\
        \{\top\}  &\text{if }x^+=z \neq w\ne x\mbox{ or } x^+=w\ne z \neq x, w \neq y\\
\{t, \top\} &\text{if } z=x,w=y=x^+\\
        \lambda(z,w) \cup \{t\} &\text{if } z=x,w=y\ne x^+\\
        \lambda(z,w) & \mbox{otherwise}
    \end{cases}$$

    For variables $a,b$ we define two initial term networks below.
    \begin{align*}
        \N^{1,a,b} = & (\{x\}, \{(x,x) \mapsto \{\top,1,a,\simm b\} \})\\
        \N^{2,a,b} = & (\{x,y\}, \{(x,x) \mapsto \{\top,1\},(x,y) \mapsto \{\top,a\}, \\
        &(y,x) \mapsto \{\top,\simm b\}, (y,y) \mapsto \{\top,1 \} \})
    \end{align*}
\end{definition}

\begin{proposition}
    \label{prop:fmlasgame}
    For every $n < \omega$ there exists a first-order formula $\sigma_n$ that corresponds to $\exists$ having a winning strategy for $\Gamma_n(\A)$.
\end{proposition}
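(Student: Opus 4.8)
The plan is to turn the finite game tree of $\Gamma_n(\A)$ into a universal first-order sentence $\sigma_n$ by recursion on the number of remaining rounds. Three bounds make this work. First, $\Gamma_n$ lasts only $n+1$ moves, so plays have bounded length. Second, a network arising in an $\exists$-play has at most $n+2$ nodes (the initial network has at most two and only a witness move adds a node, at most one per round), and every term occurring in a label is built from $a,b$ and the at most $2n$ elements $\forall$ supplies, using only $;$ and $\simm$ to nesting depth at most $n$; hence, up to renaming nodes to $\{0,\dots,n+1\}$, there are only finitely many positions, each a finite object. Third, by the inheritance of inconsistency along prenetworks, $\exists$ never gains by responding with more than the single-edge update $\N\mapsto\N^{+,x,y,t}$ forced by the move; in particular on the initial move $\exists$ may assume $\N_0\in\{\N^{1,a,b},\N^{2,a,b}\}$, since any legal initial response has one of these as a prenetwork.

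The third point is the crux and I would isolate it as a monotonicity lemma, proved by induction on $k$: if $\N'\subseteq\N$ and $\exists$ can survive $k$ further rounds from $\N$, then $\exists$ can survive $k$ further rounds from $\N'$. The base case $k=0$ is the inheritance of consistency. For the step, any $\forall$-move legal on $\N'$ is legal on $\N$ (its node parameters lie in $N'\subseteq N$ and its element parameters taken from a label of $\N'$ lie in the corresponding label of $\N$); $\exists$'s surviving answer on $\N$ produces a network containing the demanded new label entry; $\exists$ answers on $\N'$ by the corresponding $\N^{+,x,y,t}$-update (reusing the same node, or a fresh one in a witness move), which is a prenetwork of $\exists$'s answer on $\N$, and the induction hypothesis closes the loop. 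Consequently $\exists$ may be assumed to play the minimal $\N^{1,a,b}$ or $\N^{2,a,b}$ on the initial move and, at every later round, one of the finitely many $\N^{+,x,y,t}$-updates allowed by $\forall$'s move (two updates in succession for a witness move, over the finitely many choices of middle node).

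Given this, I would define for each term network $\N$ with at most $n+2$ nodes the quantifier-free consistency formula $\mathrm{cons}(\N)\ :=\ \bigwedge_{x,y}\bigwedge_{s\in\lambda(x,y)}\bigwedge_{t\in\lambda(y,x)} s\ne{\simm}t$, and then recursively $\psi_0(\N)=\mathrm{cons}(\N)$ and $\psi_{k+1}(\N)=\bigwedge_{m}\theta_m$, the conjunction ranging over all $\forall$-moves $m$ on $\N$ (finitely many once the move type, the node parameters and the label terms chosen are fixed), where $\theta_m$ expresses $\exists$'s obligation: for a join move at $(x,y)$ with chosen $a\in\lambda(x,y)$, $\theta_m=\forall u\,\forall v\,\big(a\le u+v\ \rightarrow\ \psi_k(\N^{+,x,y,u})\vee\psi_k(\N^{+,x,y,v})\big)$; for an involution move at $(x,y)$ with element $u$, $\theta_m=\forall u\,\big(\psi_k(\N^{+,x,y,u})\vee\psi_k(\N^{+,y,x,{\simm}u})\big)$; for a composition move at $(x,y,z)$ with $a\in\lambda(x,y)$, $b\in\lambda(y,z)$, $\theta_m=\psi_k(\N^{+,x,z,a;b})$ with no new quantifier; and for a witness move at $(x,y)$ with $a\in\lambda(x,y)$, $\theta_m=\forall u\,\forall v\,\big(a=u;v\ \rightarrow\ \bigvee_{z}\psi_k((\N^{+,x,z,u})^{+,z,y,v})\big)$ with $z$ ranging over the nodes of $\N$ together with a new node. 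Each $\psi_k(\N)$ is thus built from (possibly negated) equations by $\wedge$, $\vee$ and universal quantifiers, so it is equivalent to a universal sentence over its free variables; finally set $\sigma_n\ =\ \forall a\,\forall b\,\big(a\le b\ \vee\ \psi_n(\N^{1,a,b})\ \vee\ \psi_n(\N^{2,a,b})\big)$, which is again universal.

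It then remains to check that $\A\models\sigma_n$ iff $\exists$ has a winning strategy for $\Gamma_n(\A)$, which I would do by induction on $k$ showing $\A\models\psi_k(\N)$ iff $\exists$ can survive $k$ further rounds from $\N$: the forward direction reads a surviving continuation off the witnesses chosen for the disjunctions and the elements $\forall$ plays into the universal quantifiers, while the backward direction applies the monotonicity lemma to replace an arbitrary surviving response by the frugal $\N^{+,x,y,t}$-response named in $\psi_k$; combining this with the reduction of the initial move to $\{\N^{1,a,b},\N^{2,a,b}\}$ yields the equivalence. I expect the real work to be the bookkeeping behind the second and third bounds above -- fixing a canonical finite stock of node names and label terms closed under the updates, and proving the monotonicity lemma cleanly in the fresh-node case of the witness move -- after which the translation and its correctness are routine structural inductions.
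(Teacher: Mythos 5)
Your proposal is correct and follows essentially the same route as the paper: the same recursion building formulas $\psi_k(\N)$ (the paper's $\phi_n(\N)$) over term networks, with identical move-by-move clauses and the same final universal sentence $\sigma_n$ over the two initial networks $\N^{1,a,b}$, $\N^{2,a,b}$. The only difference is that you spell out as an explicit monotonicity lemma the reduction to conservative (label-minimal) play, which the paper simply asserts as easy to see.
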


\begin{proof}

    We show by induction that there exists a formula $\phi_n(\N)$ for every $0 \leq n < \omega$, defined for a finite term network $\N$, with all the variables universally quantified that signifies that the network can remain consistent for $n$ more moves of the representation game where $\exists$ plays \emph{conservatively}, i.e., only adds the requested labels. It is easy to see that she has a winning strategy for the game if and only if she also has one for the conservative play.

    In the base case, $\phi_0(\N)$ defined below signifies consistency (remaining consistent for zero moves)
    $$\phi_0(\N) = \bigwedge_{x,y \in N} \bigwedge_{t \in \lambda(x,y)}\bigwedge_{t' \in \lambda(y,x)} t \neq \simm t'.$$

    In the induction case, we assume that $\phi_{n}(\N')$, where $\N'$ is a term network with all variables universally quantified, is both necessary and sufficient for $\N'$ to be able to remain consistent for $n$ moves. Then we show you can define $\phi_{n+1}(\N)$ that extends the assumption to $n+1$ moves. Although we use $a,b$ here, the variable names should be unique when constructing these formulas.
     \begin{align*}
         \phi_{n+1}(\N) = & \bigwedge_{x,y \in N} \bigwedge_{t \in \lambda(x,y)} \forall a,b \big( t \leq a + b \implies (\phi_n(\N^{+,x,y,a}) \vee \phi_n(\N^{+,x,y,b})) \big)\\
         \wedge & \bigwedge_{x,y \in N} \bigwedge_{t \in \lambda(x,y)} \forall a \big( \phi_n(\N^{+,x,y,a}) \vee \phi_n(\N^{+,y,x,\simm a}) \big)\\
         \wedge & \bigwedge_{x,y,z \in N} \bigwedge_{t \in \lambda(x,y)} \bigwedge_{t' \in \lambda(y,z)} \phi_n(\N^{+,x,z,t;t'}) \\
         \wedge & \bigwedge_{x,y \in N} \bigwedge_{t \in \lambda(x,y)} \forall a,b \Big( t = a;b \implies \!\!\!\!\bigvee_{z \in N \uplus \{x^+\}} \phi_n\big((\N^{+,x,z,a})^{+,z,y,b} \big)\Big)
     \end{align*}

     We now have a formula $\phi_n(\N)$ for every $0 \leq n < \omega$ that ensures $\exists$ can keep a universally quantified term network $\N$ consistent. Hence the formula 
     $$\sigma_n = \forall a,b\ (a \nleq b \implies (\phi_n(\N^{1,a,b}) \vee \phi_n(\N^{2,a,b}))$$
 ensures that $\exists$ has a winning strategy for a conservative game of length $n$.
 \qed\end{proof}

\begin{corollary}
    \label{cor:fmlasgame}
    $\Sigma = \{\sigma_1, \sigma_2, \ldots\}$ together with the axioms for cyclic distributive involutive semirings is a recursively enumerable theory that axiomatises $\RWkRA$.
\end{corollary}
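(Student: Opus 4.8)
The plan is to chain the two preceding propositions together, with one small lemma bridging them. Proposition~\ref{prop:repgame} identifies representability of an algebra $\A$ with $\exists$ winning the infinite game $\Gamma_\omega(\A)$, and Proposition~\ref{prop:fmlasgame} packages ``$\exists$ wins the finite game $\Gamma_n(\A)$'' into the single universal first-order sentence $\sigma_n$. So the missing ingredient is a \emph{bridge lemma}: $\exists$ wins $\Gamma_\omega(\A)$ if and only if $\exists$ wins $\Gamma_n(\A)$ for every $n<\omega$. Granting this, for any $\A$ satisfying the finitely many equational axioms of the Definition we get: $\A$ is representable iff $\exists$ wins $\Gamma_\omega(\A)$ (Proposition~\ref{prop:repgame}), iff $\exists$ wins $\Gamma_n(\A)$ for all $n$ (bridge lemma), iff $\A\models\sigma_n$ for all $n$ (Proposition~\ref{prop:fmlasgame}), iff $\A\models\Sigma$. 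Since every representable algebra also satisfies the equational axioms (they hold in every $\m{wk}(\m X)$), the theory given by $\Sigma$ together with those axioms has exactly the representable algebras as its models; the same argument with the extra Heyting moves in the game yields $\RWkRA$, and without them $\RwkRA$.

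The bridge lemma is where the only real work lies. Left-to-right is trivial: restrict a winning $\Gamma_\omega$-strategy to the first $n$ rounds. For the converse, first recall from the proof of Proposition~\ref{prop:fmlasgame} that $\exists$ may as well play \emph{conservatively}, adding only the labels she is forced to consider; the payoff of this reduction is that in conservative play $\exists$ has just \emph{finitely many} legal responses to any move of $\forall$ --- two for a join, involution, or initialisation move, exactly one for a composition move, and for a witness move one choice per existing node plus a single canonical ``fresh node'' option. Hence, once a $\forall$-strategy is fixed, the tree of conservative plays consistent with it is finitely branching (all branching occurs at $\exists$'s moves). Now suppose $\exists$ wins every $\Gamma_n(\A)$ but not $\Gamma_\omega(\A)$. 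The set of plays in which $\exists$ never produces an inconsistent network is closed in the product topology --- an intersection of clopen sets, each depending on finitely many moves --- so by the Gale--Stewart theorem $\Gamma_\omega(\A)$ is determined and $\forall$ has a winning strategy $\tau$. Against $\tau$, every branch of the (finitely branching) tree of conservative $\exists$-plays ends in an inconsistent network and so is finite; by K\"onig's lemma the tree is finite, of some depth $N$, whence $\forall$ wins $\Gamma_N(\A)$, contradicting that $\exists$ does. (Gale--Stewart can be avoided by a direct argument instead: if from a given position some conservative $\exists$-response survives $n$ more rounds for every $n$, then --- the response set being finite --- one response survives \emph{all} $n$ simultaneously, and iterating such responses builds a winning $\Gamma_\omega$-strategy.) I expect this lemma, specifically the finite-branching bookkeeping combined with the compactness step, to be the only part of the argument that is not routine.

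The rest is immediate. For recursive enumerability, note that the recursion defining $\phi_n$, and hence $\sigma_n$, in Proposition~\ref{prop:fmlasgame} is effective and uniform in $n$: it merely manipulates finite term networks through the finitary operations $\N^{+,x,y,t}$. Thus $\sigma_1,\sigma_2,\dots$ can be listed by an algorithm, and adjoining the finitely many equational axioms keeps the theory recursively enumerable. One bookkeeping point worth flagging: relation composition is associative, so every representable algebra satisfies associativity, and therefore ``the axioms for cyclic distributive involutive semirings'' (which include associativity) are a legitimate choice for the equational part --- they hold throughout $\m{wk}(\m X)$, so no model is lost.
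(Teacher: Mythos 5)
Your proof is correct and follows exactly the route the paper intends: the corollary is stated there without any explicit proof, as an immediate consequence of Propositions~\ref{prop:repgame} and~\ref{prop:fmlasgame}, and the only genuinely missing ingredient is precisely your bridge lemma, which your finite-branching/pigeonhole argument over conservative plays (the standard step-by-step compactness argument; the Gale--Stewart detour is unnecessary, as you note) establishes correctly and without needing countability beyond what Proposition~\ref{prop:repgame} already handles. Your remarks on recursive enumerability, on associativity holding in all $\m{wk}(\m X)$, and on the $\RwkRA$ versus $\RWkRA$ reading (extra moves for the Heyting arrow) are consistent with the paper's implicit treatment.
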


\section{Frames, frame games, and finite pebble games}

In this section we present finite algebras as frames, similar to Routley-Meyer frames or relevance frames for relevance logic \cite{BDM2009} and atom structures of atomic relation algebras \cite{Mad1982}. We then define a modified version of the representation game that utilises frames.

Finally, we define an $n$-pebble versions of the frame game. Analogous to the abstract classes of relation algebras $\RA_\omega \subseteq \hdots \subseteq \RA_3 \subseteq \RA_2$, this gives rise to classes of weakening relation algebras $\wkRA_\omega \subseteq \hdots \subseteq \wkRA_3 \subseteq \wkRA_2$. Clearly $\RA_\omega, \wkRA_\omega$ are the classes of representable relation algebras and weakening algebras, respectively. Furthermore, similarly to $\RA_4$, we say that $\wkRA_4$ is the class of weakening relation algebras.

First, observe that the language of $\RwkRA$ does not include negation and hence the lattice need not be Boolean. As we will see in Section~\ref{smallRwkRA}, the smallest representable non-Boolean algebra is a 4-element chain $\m S_4$. Thus we cannot present finite weakening relation algebras using atoms. Instead, we make use of join-irreducibles.

\begin{definition}
    A non-$\bot$ element $a$ of a representable weakening relation algebra is \emph{join-irreducible} if and only if for all $b,c$ if $a = b+c$ then $a=b$ or $a=c$.
It is \emph{join-prime} if and only if for all $b,c$ if $a \leq b+c$ then $a \leq b$ or $a \leq c$.
\end{definition}

Because $\cdot$ distributes over $+$ we have that an element is join-irreducible if and only if it is join-prime. In the finite case every algebra will have join-irreducibles and every element is a join of join-irreducibles. (In general this is only true for \emph{perfect} algebras. In fact, by definition, a distributive lattice is \emph{join-perfect} if every element is a join of completely join-irreducible elements. This generalises the concept of \emph{atomic} for Boolean algebras.)

The element $a$ in the result below is called the \emph{join-irreducible label} of $(x,y)$.
\begin{proposition}
    \label{prop:labelirr}
    In a representation $h$ of a finite representable weakening relation algebra $\A$, for any pair $(x,y)$ there exists a join-irreducible $a \in \A$ such that
    $${\uparrow}a=\{s \in \A \mid (x,y) \in h(s) \}.$$
\end{proposition}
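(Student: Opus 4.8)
The plan is to show that the set $S=\{s\in\A\mid (x,y)\in h(s)\}$ is a principal up-set generated by a single join-irreducible element, and to do this purely algebraically using the fact that $h$ is an injective homomorphism into $\m{wk}(\m X)$ and that $\A$ is finite. First I would observe that $S$ is upward closed: if $s\in S$ and $s\le t$ then $h(s)\subseteq h(t)$ (homomorphisms are order preserving, since $s\le t$ iff $s+t=t$ and $h$ preserves $+$), so $(x,y)\in h(t)$. Next I would show $S$ is closed under finite meets: if $s,t\in S$ then $(x,y)\in h(s)\cap h(t)=h(s\cdot t)$, so $s\cdot t\in S$. Since $\A$ is finite, $S$ has a least element, namely $a=\prod S$, and $S={\uparrow}a$. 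It remains to prove $a$ is join-irreducible (equivalently, by the remark in the excerpt, join-prime), and that $S$ is nonempty — the latter is immediate since $\top\in\lambda(x,y)$ in any network, i.e. $(x,y)\in h(\top)$ because $h(\top)$ is the top weakening relation (an equivalence relation, but in particular it contains every pair of the relevant component; here one uses that $h$ is a representation so $h(\top)$ is an equivalence relation and $(x,y)$ lies in a single block — more carefully, $x,y$ range over one $h(\top)$-block).

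For join-irreducibility of $a$: suppose $a=b+c$. Then $(x,y)\in h(a)=h(b)\cup h(c)$, so without loss of generality $(x,y)\in h(b)$, giving $b\in S$ and hence $a\le b$; combined with $b\le b+c=a$ we get $a=b$. This shows $a$ is join-irreducible. (If one prefers to argue via join-primality: if $a\le b+c$ then $a\in S$ forces $b+c\in S$, so $(x,y)\in h(b)\cup h(c)$, whence $a\le b$ or $a\le c$; the two notions coincide in a distributive lattice, as already noted.) Finally, $a\ne\bot$: if $a=\bot$ then $(x,y)\in h(\bot)=\emptyset$ since $h$ preserves $\bot$, a contradiction. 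So $a$ is a genuine join-irreducible element and ${\uparrow}a=S=\{s\in\A\mid(x,y)\in h(s)\}$ as required.

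The only mild subtlety — and the step I would be most careful about — is confirming that $S$ is nonempty and that it indeed has a meet inside $\A$ realised by $h$; both rest on finiteness (so that $\prod S$ is a finite meet, which $h$ preserves) and on $\top\in S$. Everything else is a routine unwinding of "homomorphism" and "distributive lattice", so I do not expect a real obstacle; the proposition is essentially the observation that in any representation of a finite (hence join-perfect) algebra, each pair of points in a $\top$-block picks out a principal filter of the lattice, and principal filters of a finite distributive lattice are exactly the up-sets of join-irreducibles.
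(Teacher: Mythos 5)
Your proof is correct and follows essentially the same route as the paper: show that $S=\{s\mid (x,y)\in h(s)\}$ is a nonempty, upward-closed, meet-closed proper subset, and use finiteness to extract its least element, which is then join-prime. The only (harmless) difference is that you get meet-closure directly from $h$ preserving $\cdot$, whereas the paper derives it via $\sim$ and De Morgan; you are also slightly more careful than the paper in noting that the pair $(x,y)$ must lie in a single $h(\top)$-block for $S$ to be nonempty.
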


\begin{proof}
    A representation $h$ maps joins to unions, hence the set $\{s \mid (x,y) \in h(s) \}$ is upward closed and if $(x,y) \in h(a + b)$ then it is also in $h(a)$ or $h(b)$. Hence the base set of the representation is itself a union of upward closures of join-primes. Now if it is above ${\uparrow}a$ and ${\uparrow}b$ then it must be the case that $(y,x)$ is in neither $h(\simm a)$ nor $h(\simm b)$ and thus $(x,y) \in h(\simm(\simm a + \simm b)) = h(a \cdot b)$. Thus the meet of all such join-irreducibles must also be a non-$\bot$ element that is join-prime and below all elements in the set.
\qed\end{proof}

Although the converse operation is not defined in our language, we can use the following trick to define a useful unary operation on the join-irreducibles.

\begin{definition}
    For every join-irreducible $a$ in a finite algebra, define $\hat{a} = \simm\sum_{a \nleq s} s$ where $\sum$ is with respect to join ($+$).
\end{definition}

    The join $\sum_{s \nleq t} t$, defined for all $s$ in a finite algebra $\A$, is usually denoted $\kappa(s)$. If we take $s \leq s' \in \A$ we have $s \nleq t \Rightarrow s' \nleq t$ and thus $\kappa(s) \leq \kappa(s')$, hence $\kappa$ is order preserving. Because $\simm$ is order reversing and $\kappa$ is order preserving we have that $\hatt$ is order reversing. 

\begin{proposition}
    \label{prop:jitildaji}
    In any finite bounded distributive involutive additive algebra $\A$, if $a$ is a join-irreducible, so is $\hat{a}$. 
\end{proposition}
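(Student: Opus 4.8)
The plan is to exploit that $\simm$ is an order-reversing involution, hence a dual automorphism of the underlying bounded distributive lattice, so that $\hat{a} = \simm\kappa(a)$ (writing $\kappa(a) = \sum_{a \nleq s} s$ as in the remark preceding the statement) is join-irreducible exactly when $\kappa(a)$ is meet-irreducible. Thus the whole statement reduces to showing $\kappa(a)$ is meet-irreducible whenever $a$ is join-irreducible.

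First I would record that $\simm$ is a dual lattice automorphism: the axiom $\simm(s \cdot t) = \simm s + \simm t$ together with $\simm\simm s = s$ yields the companion law $\simm(s + t) = \simm s \cdot \simm t$ (apply the axiom to $\simm s$ and $\simm t$, then apply $\simm$), and we already know $\simm$ reverses $\leq$; in particular $\simm$ is a bijection on $A$ and $\simm\top = \bot$. Next I would establish the key fact about $\kappa(a)$: since $a$ is join-irreducible and the lattice is distributive, $a$ is join-prime (as noted just before the statement), so $a \leq \sum_{a \nleq s} s = \kappa(a)$ would force $a \leq s$ for some $s$ with $a \nleq s$, a contradiction. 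Hence $a \nleq \kappa(a)$, so $\kappa(a)$ itself belongs to $\{s \mid a \nleq s\}$ and is therefore its maximum; in particular $\kappa(a) \neq \top$, so $\hat{a} = \simm\kappa(a) \neq \bot$, as required for a join-irreducible.

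Meet-irreducibility of $\kappa(a)$ then follows: if $\kappa(a) = u \cdot v$ with $\kappa(a) \neq u$ and $\kappa(a) \neq v$, then $\kappa(a) < u$ and $\kappa(a) < v$, and since $\kappa(a)$ is the largest element not above $a$, maximality forces $a \leq u$ and $a \leq v$, whence $a \leq u \cdot v = \kappa(a)$, contradicting $a \nleq \kappa(a)$. Finally, to see that $\hat{a}$ is join-irreducible, suppose $\hat{a} = b + c$; applying $\simm$ gives $\kappa(a) = \simm\hat{a} = \simm(b + c) = \simm b \cdot \simm c$, so meet-irreducibility of $\kappa(a)$ yields $\kappa(a) = \simm b$ or $\kappa(a) = \simm c$, and applying $\simm$ once more gives $\hat{a} = b$ or $\hat{a} = c$.

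I do not expect a real obstacle here; the only points that need care are the appeal to join-\emph{primeness} (rather than merely join-irreducibility) to guarantee $a \nleq \kappa(a)$, which is exactly what makes $\kappa(a)$ the maximum of its own defining set, and the verification that $\simm$ genuinely acts as a dual automorphism of the lattice rather than just an order-reversing map.
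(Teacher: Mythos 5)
Your argument is correct and follows essentially the same route as the paper's proof: the paper simply cites the well-known fact that $\kappa(a)$ is meet-irreducible for a join-irreducible $a$ and then transports this through the order-reversing involution $\simm$, whereas you additionally spell out why $\kappa(a)$ is meet-irreducible (via join-primeness and maximality of $\kappa(a)$ in $\{s \mid a \nleq s\}$) and verify the De Morgan dual law and $\hat{a}\neq\bot$. These extra details are accurate but do not constitute a different approach.
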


\begin{proof}
    It is well known that $\kappa(a)$ of a join-irreducible $a$ in a lattice is meet irreducible and because $\sim$ is order reversing, that means that $\hat{a} = \simm \kappa(a)$ is a join-irreducible. 
\qed\end{proof}

\begin{proposition}
    If a pair $(x,y)$ in a representation has the join-irreducible label $a$, then $(y,x)$ has label $\hat{a}$. Moreover, $\hat{\hat a}=a$.
\end{proposition}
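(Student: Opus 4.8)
The plan is to show that the set $U=\{s\in\A\mid (y,x)\in h(s)\}$ equals ${\uparrow}\hat a$, where $\hat a=\simm\kappa(a)$ and $\kappa(a)=\sum_{a\nleq t}t$. By Proposition~\ref{prop:labelirr} applied to the pair $(y,x)$, this set is ${\uparrow}c$ for some join-irreducible $c$, so it suffices to identify $c$ with $\hat a$, and by Proposition~\ref{prop:jitildaji} we already know $\hat a$ is join-irreducible, so the second claim $\hatt{\hat a}=a$ will then follow by symmetry (applying the first claim to the pair $(y,x)$, whose join-irreducible label is $\hat a$, gives that $(x,y)$ has label $\hatt{\hat a}$, but $(x,y)$ has label $a$, and labels are unique since ${\uparrow}c={\uparrow}c'$ forces $c=c'$ for join-irreducibles).

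For the main identification I would use the complement-converse interpretation of $\simm$ directly. For any $s\in\A$ we have $(y,x)\in h(s)$ if and only if $(x,y)\notin h(s)^{\smallsmile}$... more precisely, $h(\simm s)=\neg h(s)^{\smallsmile}$, so $(y,x)\in h(\simm s)$ iff $(x,y)\notin h(s)$. Hence for a join-irreducible $s$, writing $b$ for the join-irreducible label of $(y,x)$, we get: $b\leq \simm s \iff (y,x)\in h(\simm s)\iff (x,y)\notin h(s)\iff a\nleq s$ (the last step because $(x,y)\in h(s)$ iff $s\in{\uparrow}a$ iff $a\leq s$). So $\simm s$ lies above $b$ exactly for those $s$ with $a\nleq s$. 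Taking the join of all such $\simm s$ and using that $\simm$ turns joins of the $s$'s into meets — actually I want the other direction: $b\leq \simm s$ for all $s$ with $a\nleq s$ means $b\leq \bigwedge_{a\nleq s}\simm s=\simm\bigl(\sum_{a\nleq s}s\bigr)=\simm\kappa(a)=\hat a$. Conversely one needs $\hat a\leq b$, equivalently $b\notin{\uparrow}\hat a$ is impossible; since ${\uparrow}b$ is the label set and $\hat a$ is join-irreducible, if $\hat a\notin{\uparrow}b$ then $b\nleq\hat a$, but we just showed $b\leq\hat a$. Wait — I have $b\leq\hat a$; to get equality I use that $U={\uparrow}b$ and I must show $\hat a\in U$, i.e. $(y,x)\in h(\hat a)$, i.e. $(x,y)\notin h(\kappa(a))$, i.e. $a\nleq\kappa(a)$, which holds precisely because $a$ is join-prime (if $a\leq\kappa(a)=\sum_{a\nleq t}t$ then $a\leq t$ for some $t$ with $a\nleq t$, contradiction). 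Thus $\hat a\in{\uparrow}b=U$, so $\hat a\geq b$, combined with $b\leq\hat a$ gives $b=\hat a$.

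The step I expect to be most delicate is bookkeeping the direction of the inequality $\simm$-order-reversal together with the passage between ``$(x,y)\in h(s)$ for all $s$ in an up-set'' and the single generator of that up-set, making sure the join-irreducibility (equivalently join-primality, as established earlier) of both $a$ and $\hat a$ is invoked at exactly the right points — in particular the proof that $a\nleq\kappa(a)$ is the crux that guarantees $\hat a$ actually occurs as the label rather than merely bounding it from above. Everything else is routine manipulation of the weakening-relation semantics of $\simm$ and the definition of $\kappa$.
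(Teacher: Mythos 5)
There is a genuine gap in the final antisymmetry step. Writing $b$ for the join-irreducible label of $(y,x)$, your computation $b\leq\simm s\iff(y,x)\in h(\simm s)\iff(x,y)\notin h(s)\iff a\nleq s$ correctly makes $b$ a lower bound of $\{\simm s\mid a\nleq s\}$, hence $b\leq\prod_{a\nleq s}\simm s=\simm\kappa(a)=\hat a$. But the ``converse'' you then offer does not deliver $\hat a\leq b$: showing $\hat a\in U={\uparrow}b$ (via $a\nleq\kappa(a)$, which is correct) says precisely $b\leq\hat a$ again, so your closing sentence combines the inequality $b\leq\hat a$ with itself and equality is never established. Join-irreducibility of $b$ and $\hat a$ does not promote a one-sided inequality to an equality, so as written the proof only shows $b\leq\hat a$, and the second claim $\hat{\hat a}=a$, which you derive from the first, inherits the gap.

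The missing half is available from your own tools, and it is essentially how the paper argues. Since $b$ itself belongs to the label set of $(y,x)$, i.e.\ $(y,x)\in h(b)$, the complement-converse semantics gives $(x,y)\notin h(\simm b)$, hence $a\nleq\simm b$; thus $\simm b$ is one of the joinands of $\kappa(a)$, so $\simm b\leq\kappa(a)$, and applying the order-reversing $\simm$ with the involution law yields $\hat a=\simm\kappa(a)\leq\simm\simm b=b$. Equivalently, and this is the paper's one-line route via the argument of Proposition~\ref{prop:labelirr}: the label $b$ is the minimum (hence the meet) of the label set $U={\uparrow}b$ of $(y,x)$, and your biconditional identifies $U$ as $\{\simm s\mid a\nleq s\}$ (up to the involution), so $b=\prod_{a\nleq s}\simm s=\simm\kappa(a)=\hat a$, giving both inequalities at once. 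With this repaired, your deduction of $\hat{\hat a}=a$ by applying the first claim to $(y,x)$ and using uniqueness of join-irreducible labels is fine, and your use of join-primality to get $a\nleq\kappa(a)$ is exactly the right point at which irreducibility enters.
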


\begin{proof}
   $\simm s \in h(y,x)$ if and only if $s \notin h(x,y)$, i.e. $a \nleq s$. Thus, by the argument from Proposition~\ref{prop:labelirr} the join-irreducible label of $(y,x)$ can be written as $\prod_{a \nleq s} \simm s$ where $\prod$ is with respect to meet ($\cdot$) and this is equivalent to $\simm \sum_{a \nleq s} s$ by the De Morgan equivalence.
\qed\end{proof}

Finally to characterise composition, we need to define a ternary predicate, similar to the set of allowed triangles in relation algebras.





\begin{definition} Let $\A$ be a finite bounded cyclic involutive $d\ell$-magma and
    define a ternary relation $R$ on the set of join-irreducibles of $\A$ by $$R(a,b,c)\text{ if and only if }
    a \leq b;c.$$
\end{definition}

For relation algebras with atoms $a,b,c$ the Peircian triangle law says that
$$
         a \leq b;c \iff  \hat{a} \leq \hat{c};\hat{b}  \iff b \leq a;\hat{c}
\iff        \hat{b} \leq c;\hat{a} \iff c \leq \hat{b};a \iff\hat{c} \leq \hat{a};b.
$$
As we will see in the next section, this law does not hold for the class of representable weakening relation algebra frames. However, atom structures for relation algebras generalise to the weakening setting as follows. 


\begin{definition}
    \label{def:frame}
    A \emph{relevance frame} $\mathcal{F} = (F, I, \leq, R, \hatt)$ is a structure with a carrier set $F$, a unary predicate $I$, a partial order predicate $\leq$, a ternary predicate $R$, and an order-reversing involution operation $\hatt$ where for all $a,b,c,d$ in $F$
    \begin{enumerate}
        \item \label{it:lel} $a \leq b \Leftrightarrow \exists e: I(e) \wedge R(a,e,b)$
        \item \label{it:ler} $a \leq b \Leftrightarrow \exists e: I(e) \wedge R(a,b,e)$
        \item \label{it:dncls} $a \leq b \wedge R(b,c,d) \Rightarrow R(a,c,d)$
        \item \label{it:ucls} $b \leq c \wedge R(a,b,d) \Rightarrow R(a,c,d)$
        \item \label{it:uucls} $c \leq d \wedge R(a,b,c) \Rightarrow R(a,b,d)$
    \end{enumerate}    
\end{definition}

\begin{proposition}
    A relevance frame $\mathcal{F} = (F, I, \leq, R, \hatt)$ defines a bounded involutive unital $d\ell$-magma $\A = (A,\cdot,+,\bot,\top,;,1,\sim)$ by taking $(F, \leq)$ as the join-irreducibles of the lattice with their partial order and for all $s,t \in A$
    $$1 = \sum_{I(a)} a, \hspace{1cm}
        \simm s = \sum_{\hat{a} \nleq s} a,\hspace{1cm}
        s;t = \sum_{b \leq s, c \leq t, R(a,b,c)} a$$
    where $a,b,c\in F$.    
\end{proposition}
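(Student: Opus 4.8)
The plan is to realise $A$ concretely through Birkhoff duality for finite distributive lattices: let $A$ be the lattice of down-sets of the poset $(F,\le)$, and identify each $a\in F$ with the principal down-set ${\downarrow}a$. Then $F$ is exactly the set of join-irreducibles of $A$, every element of $A$ is the join of the join-irreducibles below it, and $\sum_{a\in X}a=\bigcup_{a\in X}{\downarrow}a$ for $X\subseteq F$, so in particular $\bot=\emptyset$ and $\top=F$. Under this identification the three displayed formulas each denote a genuine element of $A$ (a join of members of $F$), so $1$, $\simm$ and $;$ are well defined, and the lattice axiom (that $(A,\cdot,+,\bot,\top)$ is a bounded distributive lattice) holds by construction. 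It then remains to check bilinearity of $;$ over $+$, the $\bot$-laws, unitality of $1$, $\simm\simm s=s$, and the De Morgan law $\simm(s\cdot t)=\simm s+\simm t$. Bilinearity and the $\bot$-laws are immediate: the bound variables $b,c$ in the index set of $s;t$ range over $F$, hence over join-primes (join-irreducibility coincides with join-primeness in a distributive lattice), so $b\le s+t$ iff $b\le s$ or $b\le t$, and similarly for $c$; expanding the index set of $(s+t);(u+v)$ into its four disjuncts and using that $\sum$ distributes over set union gives bilinearity, while no member of $F$ lies below $\bot$, so $s;\bot=\bot=\bot;s$.

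The unitality axiom is where the frame axioms genuinely enter, and I expect it to be the main obstacle. I would prove $1;s=s$ by two inclusions, the case $s;1=s$ being symmetric. For $s\le 1;s$: given a join-irreducible $c\le s$, frame axiom~(\ref{it:lel}) applied to $c\le c$ yields some $e$ with $I(e)$ and $R(c,e,c)$; as $e\le 1$, this puts $c$ in the index set of $1;s$, so $c\le 1;s$, and joining over all $c\le s$ gives $s\le 1;s$. For $1;s\le s$: if $b,c\in F$ with $b\le 1$, $c\le s$ and $R(a,b,c)$, then by join-primeness $b\le e$ for some $e$ with $I(e)$, frame axiom~(\ref{it:ucls}) upgrades $R(a,b,c)$ to $R(a,e,c)$, and then frame axiom~(\ref{it:lel}) gives $a\le c\le s$; hence every summand of $1;s$ lies below $s$. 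The mirror argument for $s;1=s$ uses~(\ref{it:ler}) and~(\ref{it:uucls}) instead. The delicate point is that $b\le 1$ does \emph{not} entail $I(b)$ but only that $b$ sits below some $I$-element, which is precisely why the monotonicity axioms of the frame are needed. (Frame axiom~(\ref{it:dncls}) is not needed for this proposition.)

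The involution and De Morgan laws are then pure lattice bookkeeping. Since $\hatt$ maps $F$ to $F$, $\hat a$ is again join-prime, so from $\simm s=\sum\{b:\hat b\nleq s\}$ we get that $\hat a\le\simm s$ iff $\hat a\le b$ for some $b$ with $\hat b\nleq s$; negating, $\hat a\nleq\simm s$ iff $\hat b\le s$ holds whenever $\hat a\le b$. Because $\hatt$ is an order-reversing involution of $(F,\le)$ we have $\hat a\le b\iff\hat b\le a$, and since $b\mapsto\hat b$ is a bijection of $F$ this condition reads ``$c\le s$ whenever $c\le a$'', i.e.\ $a\le s$; thus $\simm\simm s=\sum\{a:a\le s\}=s$. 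For the De Morgan law, $\hat a\le s\cdot t$ iff $\hat a\le s$ and $\hat a\le t$, so $\{a:\hat a\nleq s\cdot t\}=\{a:\hat a\nleq s\}\cup\{a:\hat a\nleq t\}$, and taking joins gives $\simm(s\cdot t)=\simm s+\simm t$. Assembling the six verifications shows that $\A$ is a bounded involutive unital $d\ell$-magma, as claimed.
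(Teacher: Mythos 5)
Your proof is correct and follows essentially the same route as the paper: realise the lattice from the poset of join-irreducibles (you make this concrete via down-sets), then verify additivity, the $\bot$-laws, unitality via frame conditions (1)--(2), and the involution and De Morgan laws by the computation $a\leq\simm s \Leftrightarrow \hat a\nleq s$ for join-irreducible $a$. The only difference is that you spell out a step the paper leaves implicit, namely that $b\leq 1$ only gives $b\leq e$ for some $e$ with $I(e)$, so the upward-closure conditions (4)--(5) are needed to conclude $1;s\leq s$ and $s;1\leq s$; this is a welcome clarification but not a different argument.
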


\begin{proof}
    A bounded distributive lattice can be defined by its join-irreducibles and their ordering. To show that the magma is unital, we can see that no term of the join defining the composition with the identity is above the identity by Definition~\ref{def:frame}(\ref{it:lel})(\ref{it:ler}) and because $\leq$ is reflexive, there will exist, for every join-irreducible a term in the composition with the identity (on either side) equal to that join-irreducible. Thus $1$ is precisely the identity. Composition is additive by definition. $\sim$ is an involution because a join-irreducible $a \leq \simm (\simm s)$ if and only if $\hat{a} \nleq \simm s$ which is true if and only if $a = \hat{\hat{a}} \leq s$. For the De Morgan equivalence, $a \leq \simm(\simm s + \simm t)$ if and only if $\hat{a} \nleq \simm s + \simm t$, or equivalently $\hat{a} \nleq \simm s \wedge \hat{a} \nleq \simm t$ which by definition is true if and only if $a=\hat{\hat{a}} \leq s$ and $a=\hat{\hat{a}} \leq t$, or simply $a \leq s \cdot t$.
\qed\end{proof}

\begin{proposition}
    Every finite bounded cyclic involutive unital $d\ell$-magma has a unique equivalent relevance frame.
\end{proposition}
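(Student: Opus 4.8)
The plan is to build a relevance frame $\F$ directly from $\A$, verify the five frame axioms, check that the frame-to-magma construction of the previous proposition returns $\A$, and finally show that every component of $\F$ is already determined by $\A$. For the construction I would take $F$ to be the set of join-irreducibles of $\A$ with $\le$ the restriction of the lattice order, $\hatt$ the operation $\hat{a}=\simm\sum_{a\nleq s}s$ and $R$ the relation $R(a,b,c)\Leftrightarrow a\le b;c$ already introduced, and set $I(a)\Leftrightarrow a\le 1$; write $\F=(F,I,\le,R,\hatt)$ for the resulting structure. First I would gather the standard tools: in a finite distributive lattice every element is the join of the join-irreducibles (equivalently, join-primes) below it; $;$ distributes over finite joins (immediate from the bilinearity axiom $(s+t);(u+v)=s;u+s;v+t;u+t;v$) and is therefore monotone in each argument; $\simm$ is an order-reversing involution of the bounded lattice, hence a bijection between join-irreducibles and meet-irreducibles; and, re-deriving the identity I will rely on, $\kappa(\hat a)=\simm a$ where $\kappa(a)=\sum_{a\nleq s}s$, which follows by checking $\{t : \hat a\nleq t\}={\downarrow}\simm a$ using $x\le\kappa(p)\Leftrightarrow p\nleq x$ for join-prime $p$ (in particular $\hat{\hat a}=a$).

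To see $\F$ is a relevance frame, $\hatt$ is an order-reversing involution mapping $F$ into $F$ by Proposition~\ref{prop:jitildaji} and the remark preceding it. Axioms~(\ref{it:dncls})--(\ref{it:uucls}) of Definition~\ref{def:frame} are immediate: (\ref{it:dncls}) is transitivity of $\le$, and (\ref{it:ucls}),(\ref{it:uucls}) are monotonicity of $;$. Only~(\ref{it:lel}),(\ref{it:ler}) involve $I$: for~(\ref{it:lel}), if $a\le b$ then $b=1;b=(\sum_{I(e)}e);b=\sum_{I(e)}(e;b)$, so join-primeness of $a$ yields $a\le e;b$ (that is, $R(a,e,b)$) for some $e$ with $I(e)$, while conversely $I(e)$ and $R(a,e,b)$ give $a\le e;b\le 1;b=b$; axiom~(\ref{it:ler}) is symmetric, via $b;1=b$.

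For the round trip, let $\A'$ be the magma the previous proposition builds from $\F$; its lattice reduct is literally that of $\A$, so it remains to match $1$, $\simm$ and $;$. Clearly $1'=\sum_{I(a)}a=\sum_{a\le 1}a=1$. For negation and a join-irreducible $a$, $a\le\simm s\Leftrightarrow s\le\simm a\Leftrightarrow\hat a\nleq s$ (the middle step using $\simm a=\kappa(\hat a)$ and $x\le\kappa(p)\Leftrightarrow p\nleq x$), so $\simm s=\sum\{a : \hat a\nleq s\}=\simm' s$. For composition, $s;t=(\sum_{b\le s}b);(\sum_{c\le t}c)=\sum_{b\le s,\,c\le t}(b;c)$ over join-irreducibles $b,c$ by finite distributivity, and every join-irreducible $a\le s;t$ lies below some such $b;c$ by join-primeness, i.e.\ $R(a,b,c)$ for some $b\le s$, $c\le t$; the reverse inclusion is monotonicity, so $s;t=\sum_{b\le s,\,c\le t,\,R(a,b,c)}a=s;'t$ and $\A'=\A$.

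For uniqueness, let $\mathcal{G}=(F,I_{\mathcal G},\le,R_{\mathcal G},f)$ be any relevance frame whose associated magma is $\A$; by the previous proposition its carrier and order are exactly the join-irreducible poset of $\A$, so only $f$, $R_{\mathcal G}$ and $I_{\mathcal G}$ remain to be identified. The involution is forced: the negation of $\A$ is $\simm s=\sum\{a : f(a)\nleq s\}$, so evaluating $\hat a=\simm\kappa(a)$ inside $\A$ gives $\sum\{b : f(b)\nleq\kappa(a)\}=\sum\{b : b\le f(a)\}=f(a)$, that is $f=\hatt$. The relation is forced: axioms~(\ref{it:ucls}),(\ref{it:uucls}) give $b;c=\sum\{a : R_{\mathcal G}(a,b,c)\}$ for join-irreducibles $b,c$, whence $R_{\mathcal G}(a,b,c)\Leftrightarrow a\le b;c$ by join-primeness and axiom~(\ref{it:dncls}), that is $R_{\mathcal G}=R$. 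The step I expect to be the main obstacle is recovering $I_{\mathcal G}$: the implication $I_{\mathcal G}(a)\Rightarrow a\le 1$ is immediate from $1=\sum_{I_{\mathcal G}(e)}e$, but the converse has to be squeezed out of axioms~(\ref{it:lel}),(\ref{it:ler}), and it is exactly here that one must argue the frame axioms pin $I_{\mathcal G}$ down to $\{a\in F : a\le 1\}$ — the one place where the proof needs genuine care.
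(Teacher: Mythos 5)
Your construction of the frame, the verification of the five axioms, the round-trip computation, and the uniqueness arguments for $\hatt$ and $R$ are correct and in fact more detailed than the paper's own proof, which consists of exactly those three points: the finite distributive lattice is determined by its poset of join-irreducibles, $\hatt$ is unique on them by Proposition~\ref{prop:jitildaji}, and $R$ is forced because Definition~\ref{def:frame}(\ref{it:dncls})(\ref{it:ucls})(\ref{it:uucls}) make it downward closed in the first argument and upward closed in the others (your derivation $b;c=\sum\{a: R_{\mathcal G}(a,b,c)\}$, then join-primeness plus downward closure, is the fleshed-out version of that remark).

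The step you flag at the end, however, is a genuine gap, and it cannot be closed the way you anticipate: the frame axioms do \emph{not} pin $I_{\mathcal G}$ down to $\{a\in F : a\le 1\}$. Once $R_{\mathcal G}=R$ and $\sum_{I_{\mathcal G}(e)}e=1$ are known, axioms (\ref{it:lel}) and (\ref{it:ler}) hold for \emph{any} set $I'$ of join-irreducibles below $1$ whose join is $1$: if $a\le b$ then $a\le b=1;b=\sum_{e\in I'}e;b$, so join-primeness gives some $e\in I'$ with $R(a,e,b)$, and conversely $e\le 1$ gives $e;b\le b$; axiom (\ref{it:ler}) is symmetric. Concretely, in $\m{wk}(\m 2)$ the join-irreducibles below $1$ are $\{(0,1)\}$, $\{(0,0),(0,1)\}$ and $\{(0,1),(1,1)\}$, and dropping the smallest of these from $I$ still yields a relevance frame whose associated algebra is the same. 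So the converse you hoped to ``squeeze out'' of (\ref{it:lel}),(\ref{it:ler}) is not a consequence of the frame axioms; to get uniqueness of $I$ one must either build $I=\{a : a\le 1\}$ into the notion of the equivalent frame (which is what the paper tacitly does: its proof addresses only the poset, $\hatt$ and $R$, and says nothing about $I$) or impose an extra condition such as downward closure of $I$. Your instinct that this is the one delicate point was right; the resolution is a convention or an added hypothesis, not an argument from axioms (\ref{it:lel}),(\ref{it:ler}).
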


\begin{proof}
    Finite distributive lattices are determined by their poset of join-irreducibles, and from Proposition~\ref{prop:jitildaji} they have a unique $\hatt$ defined on the join-irreducibles. The mapping to $R$ is unique as Definition~\ref{def:frame}(\ref{it:dncls})(\ref{it:ucls})(\ref{it:uucls}) ensure that $R$ is downward closed in the first argument and upward closed in the other arguments.
\qed\end{proof}

\begin{proposition}
    For finite algebras and finite frames, the mappings described in the previous two lemmas are inverses of each other.
\end{proposition}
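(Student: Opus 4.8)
Write $\Phi$ for the passage from a finite relevance frame to a $d\ell$-magma and $\Psi$ for the passage from a finite algebra to its canonical frame; the plan is to show $\Phi\Psi=\mathrm{id}$ on algebras and $\Psi\Phi=\mathrm{id}$ on frames by checking each piece of structure separately. In both directions the lattice part costs nothing: a finite distributive lattice is recovered from its poset of join-irreducibles, and $\Phi$ is set up precisely so that this poset is the frame carrier $(F,\le)$; hence $\Phi\Psi$ and $\Psi\Phi$ already act as the identity on lattice reducts and on carriers, and only the unit $1$ (equivalently the predicate $I$), the involution $\simm$ (equivalently $\hatt$), and composition $;$ (equivalently $R$) remain. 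Before anything else I would isolate the lattice fact that for a join-irreducible $a$ the element $\kappa(a)=\sum_{a\nleq s}s$ is the greatest element not above $a$, i.e. $x\le\kappa(a)\iff a\nleq x$, and deduce from it the bridging identity $a\le\simm s\iff\hat a\nleq s$, valid for every join-irreducible $a$; the latter uses only that $\hatt$ is order-reversing and the defining formula $\simm s=\sum_{\hat b\nleq s}b$.

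For $\Psi\Phi=\mathrm{id}$, start from a finite relevance frame $\F$; its carrier and order are unchanged by the remark above. For $\hatt$: the recovered operation sends $a$ to $\simm\kappa(a)$ in $\Phi\F$, and combining the bridging identity with $x\le\kappa(a)\iff a\nleq x$ and with $\hatt$ being an order-reversing involution, one computes that for every join-irreducible $a'$ the relation $a'\le\simm\kappa(a)$ is equivalent to $a'\le\hat a$; since elements of a finite distributive lattice are determined by the join-irreducibles below them, this returns the original $\hatt$. For $R$: the recovered relation is $R'(a,b,c)\iff a\le b;c$ in $\Phi\F$, and unfolding the definition of $;$ together with join-primeness of $a$ reduces $a\le b;c$ to some $R(a',b',c')$ with $a\le a'$, $b'\le b$, $c'\le c$; Definition~\ref{def:frame}(\ref{it:dncls})(\ref{it:ucls})(\ref{it:uucls}) then promote this to $R(a,b,c)$, and the converse inclusion is immediate, so $R'=R$. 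For $I$: the recovered predicate is ``$e$ is a join-irreducible below $1_{\Phi\F}=\sum_{I(a)}a$'', and identifying this with the original $I$ uses that $I$ is downward closed among the join-irreducibles.

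For $\Phi\Psi=\mathrm{id}$, start from a finite bounded cyclic involutive unital $d\ell$-magma $\A$; the lattice reduct is again untouched. The unit is recovered because $\Psi$ takes $I$ to be the join-irreducibles below $1_\A$, so $\Phi$ returns $\sum_{e\le 1_\A}e=1_\A$. The involution is recovered by the bridging identity: $\Phi\Psi$ produces $\sum_{\hat b\nleq s}b=\sum_{b\le\simm_\A s}b=\simm_\A s$. The only nontrivial algebraic input is for composition, where one uses that $;$ distributes over all \emph{finite} joins in each argument --- the defining bilinearity of $;$ over $+$ iterated, with $s;\bot=\bot=\bot;s$ covering empty joins --- so that, writing each element as the join of the join-irreducibles beneath it,
\[ s;_\A t \;=\; \sum_{b\le s,\,c\le t} b;_\A c \;=\; \sum_{b\le s,\,c\le t}\ \sum_{a\le b;_\A c} a \;=\; \sum_{b\le s,\,c\le t,\,R(a,b,c)} a, \]
which is exactly what $\Phi\Psi$ assigns to $s;t$.

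No deep obstruction is expected; the difficulty is bookkeeping. The two points needing care are, first, to keep in mind that throughout the composition clauses the running indices range over join-irreducibles rather than arbitrary elements, so that the closure axioms for $R$ apply and ``join of the join-irreducibles below'' is legitimate; and second, the recovery of $I$, which is the one step that genuinely relies on the frame being well-behaved: it needs $I$ to be downward closed among the join-irreducibles, the order-sensitive analogue of the closure condition that is vacuous for atom structures of relation algebras.
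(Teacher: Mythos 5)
Your argument is essentially the paper's: Birkhoff duality handles the lattice reduct, the bridging identity $a\le\simm s\iff\hat a\nleq s$ together with the computation $\simm\kappa(a)=\sum_{\hat b\nleq\kappa(a)}b=\sum_{b\le\hat a}b=\hat a$ handles $\simm$ and $\hatt$ in both directions, and the remaining operations are verified directly; you simply spell out the parts the paper dismisses as trivial (recovering $R$ via the closure conditions of Definition~\ref{def:frame}, and recovering $;$ from finite additivity of composition with $s;\bot=\bot$ covering empty joins), and those verifications are correct.

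The one point to flag is the recovery of $I$, which you rightly single out as the step needing that $I$ be downward closed among the frame elements: this is not derivable from Definition~\ref{def:frame} as stated. For instance, on the two-element chain $f<e$ with $\hat e=f$, $\hat f=e$, $I=\{e\}$, and $R(a,e,b)\Leftrightarrow a\le b\Leftrightarrow R(a,b,e)$ (all other triples false), all the frame conditions hold, yet passing to the algebra and back replaces $I$ by its downward closure $\{e,f\}$, so the composite is not the identity on such a frame. Thus your proof, as written, rests on an undischarged hypothesis; to close it one must either add downward closure of $I$ (equivalently, that $I$ is exactly the set of frame elements below the would-be identity) as a frame condition, or read the proposition as restricted to frames arising from algebras. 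Note, however, that this is a defect of the paper's own formulation rather than of your strategy: the paper's proof passes over the same point by declaring preservation of the identity ``trivial,'' whereas you at least locate exactly what is needed.
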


\begin{proof}
    Finite distributive lattices correspond uniquely to their posets of join-irreducibles. The preservation of identity and the composition follow trivially from the definition. For $\simm,\hatt$ observe that $ \simm \kappa(a) = \hat{a} \nleq s$ if and only if $\simm s \nleq \kappa(a) = \sum_{a \nleq} t$, or equivalently $a \leq \simm s$. For the converse note that $\simm \kappa(a) = \sum_{\hat{b} \nleq \kappa(a)}b = \sum_{a \leq \hat{b}}b = \sum_{b \leq \hat{a}}b = \hat{a}$.
\qed\end{proof}

Although we have defined these frames for finite algebras, we can say that a possibly infinite algebra is \emph{frame-definable} if it can be defined by a relevance frame. In the context of relation algebras, this corresponds to complete and atomic relation algebras. Similarly to that class, we will show that every non-frame definable algebra embeds into a frame definable algebra with equivalent representability.

\begin{proposition}
    Let $\A$ be a bounded cyclic involutive unital $d\ell$-magma. Then a frame $\mathcal{F}(\A)$ can be defined by taking the carrier set of all prime filters $U \subseteq A$, with $U \leq V$ if and only if $V \subseteq U$, $\hat{U} = \{\simm s \mid s \in A \setminus U \}$, $I(U) \Leftrightarrow 1 \in U$ and $R(U,V,W)$ if and only if for all $v \in V, w \in W$ we have 
    $v {;}w \in U$.
\end{proposition}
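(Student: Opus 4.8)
The plan is to verify that the structure $\mathcal{F}(\A) = (F, I, \le, R, \hatt)$ defined on prime filters is indeed a relevance frame in the sense of Definition~\ref{def:frame}, i.e.\ that $\hatt$ is an order-reversing involution and that conditions~(\ref{it:lel})--(\ref{it:uucls}) hold. First I would check that $\le$ (reverse inclusion) is a partial order on prime filters and that $\hatt$ is well-defined: for a prime filter $U$, the set $\hat U = \{\simm s \mid s \notin U\}$ is again a prime filter, using that the complement of a prime filter is a prime ideal, that $\simm$ is an order-reversing bijection on $A$ satisfying the De~Morgan law $\simm(s\cdot t)=\simm s+\simm t$ (hence also $\simm(s+t)=\simm s\cdot\simm t$ by involutivity), so $\simm$ carries prime ideals to prime filters. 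Involutivity $\hat{\hat U}=U$ follows from $\simm\simm s = s$, and order-reversal from the fact that $U\subseteq V$ iff $\hat V\subseteq \hat U$.

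Next I would handle the order conditions (\ref{it:lel}) and (\ref{it:ler}). The natural witness is to take $e$ to be a prime filter containing $1$; the cleanest choice is a minimal prime filter above $1$, or simply to argue directly. For (\ref{it:ler}): if $U\le V$, i.e.\ $V\subseteq U$, then picking any prime filter $E$ with $1\in E$ and $E$ large enough, we need $R(U,V,E)$, i.e.\ $v;e\in U$ for all $v\in V$, $e\in E$. Since $v;1=v$ and $V\subseteq U$, monotonicity of $;$ (which follows from additivity, axiom~(2)) gives $v;e \ge v;1'=\cdots$; more carefully, one takes $E$ to be a prime filter containing $\{s : s\ge 1\}$ or uses a prime filter containing $1$ and closed upward, and checks $v;e\ge$ some element of $V\subseteq U$. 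Conversely, if such an $E$ exists with $1\in E$ and $R(U,V,E)$, then taking $e=1\in E$ gives $v=v;1\in U$ for all $v\in V$, so $V\subseteq U$, i.e.\ $U\le V$. The argument for (\ref{it:lel}) is symmetric using $1;v=v$.

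For the closure conditions (\ref{it:dncls})--(\ref{it:uucls}): downward closure in the first argument, if $U'\le U$ (so $U\subseteq U'$) and $R(U,V,W)$, then $v;w\in U\subseteq U'$ for all $v\in V,w\in W$, giving $R(U',V,W)$ immediately. Upward closure in the second argument, if $V\le V'$ (so $V'\subseteq V$) and $R(U,V,W)$, then for $v\in V'\subseteq V$ and $w\in W$ we have $v;w\in U$, so $R(U,V',W)$; the third argument is identical. These are essentially trivial once the order convention is pinned down.

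The main obstacle I expect is condition~(\ref{it:ler}) (and dually (\ref{it:lel})): proving the forward direction requires producing an actual prime filter $E$ with $1\in E$ that makes $R(U,V,E)$ hold, which needs a prime-filter extension lemma (a Birkhoff/Stone-style argument): one shows the filter generated by $1$ together with the requirement avoids a suitable prime ideal, typically the ideal-theoretic residual $\{e : \exists v\in V,\ v;e\notin U\}$ must be shown disjoint from the filter generated by $1$, using $v;1=v\in U$. Verifying that this residual set is closed under the operations needed (so that the separation lemma applies) is the delicate bookkeeping step; everything else is routine.
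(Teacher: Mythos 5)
Your proposal is correct and takes essentially the same route as the paper: a direct verification of the relevance-frame axioms, checking that reverse inclusion is a partial order, that $\hat U$ is again a prime filter (via the prime ideal $A\setminus U$ and the order-reversing De Morgan involution $\simm$), and that the monotonicity conditions on $R$ follow immediately from the order convention. The only divergence is one of detail: the paper dismisses the unit conditions (1)--(2) as ``trivially preserved,'' whereas you rightly note that their forward direction requires the prime-filter separation theorem applied to ${\uparrow}1$ against the residual set $\{e \mid \exists v \in V,\ v;e \notin U\}$, whose closure under joins (making it an ideal) uses primeness of $U$ together with additivity of $;$ --- a legitimate elaboration of a step the paper leaves implicit.
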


\begin{proof}
    $\leq$ is clearly a partial order, for closure of $\hatt$ note that $A \setminus U$ is a prime ideal, so by the order reversing property of $\sim$, $\hat{U}$ is a prime filter. Furthermore, all the unitality conditions are trivially preserved and by $U' \leq U$ if and only if $U \subseteq U'$ we have downward closure of $R$ in the first argument and upward closure in the other two.
\qed\end{proof}


\begin{proposition}
$\A$ is representable if and only if the algebra defined by $\F(\A)$ is a representable weakening relation algebra. This algebra is called the \emph{canonical extension} of $\A$.
\end{proposition}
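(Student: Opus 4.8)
The plan is to identify the algebra defined by $\F(\A)$ with the \emph{canonical extension} $\A^\delta$ of $\A$ in the signature of cyclic involutive unital $d\ell$-magmas, and then to transfer representability across the canonical embedding $e\colon\A\to\A^\delta$ given by $e(s)=\{U\mid s\in U\}$, the set of prime filters of $\A$ containing $s$. First I would pin down this identification: the distributive-lattice reduct of the frame algebra is, by the frame-to-algebra construction, the lattice of down-sets of the poset of prime filters ordered by reverse inclusion, which is exactly the canonical extension of the lattice reduct of $\A$; and the frame formulas $1=\sum_{I(a)}a$, $\simm s=\sum_{\hat a\nleq s}a$ and $s;t=\sum_{b\le s,\,c\le t,\,R(a,b,c)}a$ are precisely the canonical extensions of the constant $1$, the dual automorphism $\simm$ and the operator $;$ to that lattice. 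This is routine checking against the standard canonical-extension formulas, and it exhibits $e$ as a dense embedding of $\A$ into $\A^\delta$ preserving $\cdot,+,\bot,\top,;,1,\simm$.

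The direction ``$\A^\delta$ representable $\Rightarrow$ $\A$ representable'' is then immediate: representability is inherited by subalgebras (just restrict the representation, noting that a subalgebra shares all the constants, in particular $\top$), and $e$ realises $\A$ as a subalgebra of $\A^\delta$.

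The content is in the converse. Fix a representation $h\colon\A\to\m{wk}(\m X)$ with $h(\top)$ an equivalence relation. The structural fact I would exploit is that $\m{wk}(\m X)$ is a \emph{perfect} distributive lattice: a weakening relation on $(X,\le)$ is nothing but a down-set of the poset $X\times X^{\mathrm{op}}$, so $\Wk(\m X)$ is, as a complete lattice, a down-set lattice, hence perfect; moreover its operations are smooth, since $;$ preserves arbitrary joins in each argument (it is a complete operator) and $\simm$ is an involutive lattice dual automorphism. Hence $\m{wk}(\m X)$, with its full signature, coincides with its own canonical extension. By functoriality of the canonical extension — which preserves injectivity and carries operators to complete operators and dual automorphisms to dual automorphisms while preserving the homomorphism condition — the map $h$ extends to $h^\delta\colon\A^\delta\to\m{wk}(\m X)^\delta=\m{wk}(\m X)$, an embedding preserving $\cdot,+,\bot,\top$ with $h^\delta(s;t)=h^\delta(s);h^\delta(t)$, $h^\delta(\simm s)=\simm h^\delta(s)$ and $h^\delta(1)=1$. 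Since $h^\delta$ extends $h$ and $\top\in\A$, we get $h^\delta(\top)=h(\top)$, an equivalence relation, so $h^\delta$ is a representation of $\A^\delta$; that is, the algebra defined by $\F(\A)$ is representable.

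The main obstacle is the preservation of $;$ and $\simm$ by $h^\delta$ — equivalently, that the composition and $\simm$ read off $\F(\A)$ agree with those computed inside any representing $\m{wk}(\m X)$. If one prefers not to quote the standard canonicity machinery for canonical extensions wholesale, this must be carried out by hand via the closed/open approximation of elements of $\A^\delta$ — every element is a join of closed elements (meets of members of $e[\A]$) and a meet of open elements (joins of members of $e[\A]$) — together with compactness of $\A^\delta$ and the facts that $;$ distributes over arbitrary joins in $\m{wk}(\m X)$ whereas $\simm$ turns arbitrary meets into joins. The identification of prime filters of $\A$ with the join-irreducible labels of $h$ from Proposition~\ref{prop:labelirr}, and the behaviour of $\hatt$ on those labels, are precisely the ingredients that make this computation run.
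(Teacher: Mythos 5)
The right-to-left half of your argument (restrict a representation of the canonical extension along the canonical embedding) is fine and is exactly what the paper does. The problem is the converse, and it is concentrated in one sentence: ``Hence $\m{wk}(\m X)$, with its full signature, coincides with its own canonical extension.'' Perfection does not give this. Being a down-set lattice with a complete operator $;$ and a complete dual automorphism $\simm$ says nothing about the compactness property that characterises the canonical extension, and compactness fails for the identity map on any infinite $\m{wk}(\m X)$: already in $\m{wk}(X,=)=\m{Rel}(X)$ with $X$ countably infinite, the meet of all cofinite relations lies below the join of all singleton relations with no finite refinement. Worse, a cardinality count kills the whole route: for that same $\A=\m{Rel}(X)$ with $h$ the identity, $|\A|=2^{\aleph_0}$ while $\A^\delta$ (down-sets of prime filters of $\A$) has cardinality far exceeding $2^{\aleph_0}=|\m{wk}(\m X)|$, so there is no injective map $h^\delta\colon\A^\delta\to\m{wk}(\m X)$ of any kind, let alone a homomorphism extending $h$. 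The canonical extension of a representable algebra is representable, but in general only over a strictly larger poset than the one representing $\A$; any argument that tries to reuse the same base $\m X$, including your ``by hand'' fallback via closed/open approximation into $\m{wk}(\m X)$, cannot work. Note also that retreating to the genuine canonicity statement $h^\delta\colon\A^\delta\to\m{wk}(\m X)^\delta$ does not help, since $\m{wk}(\m X)^\delta$ is not known to lie in $\RwkRA$ --- that is an instance of the very proposition being proved, so the appeal would be circular.

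The paper takes a different and essentially step-by-step route for this direction: from the representation $h$ of $\A$ it reads off, for each pair $(x,y)$ in the base, the prime filter $U=\{a\in\A\mid (x,y)\in h(a)\}$ (with $(y,x)$ labelled by $\hat U$), and uses these prime-filter labels to let $\exists$ survive the representation game for the algebra defined by $\F(\A)$; by Proposition~2 and Corollary~1 it suffices that she wins every finite-length game, which is a compactness-style argument in which only finitely many elements and finitely many meets need to be placed consistently inside the relevant prime filters at each stage. The resulting representation of the canonical extension lives on a new base built by the game, not on $\m X$. To repair your proof you would need to replace the ``$\m{wk}(\m X)$ is its own canonical extension'' step by some such construction of a new base poset (game/ultrafilter/step-by-step), at which point you are essentially reproducing the paper's argument.
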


\begin{proof}
    Because $\A$ is a subalgebra of the algebra defined by $\F(\A)$, we know that the right to left implication is true. For the other direction, if $\A$ is representable, every $(x,y)$ will have a prime filter $U$ such that $(x,y) \in h(a)$ if and only if $a \in U$ to represent the lattice correctly. The prime filter defining $(y,x)$ will be exactly $\hat{U}$. The identity is also correctly represented as it is only above those prime filters that include it. Finally for $;$ we have shown that it suffices for $\exists$ to have a winning strategy for a game of any finite length. Thus at any point we need to show that the compositions are correctly represented if and only if all compositions finite meets are properly included in the relevant prime filter.
\qed\end{proof}

\begin{definition}
    A \emph{frame network} $\N = (N,\lambda)$ is defined for a frame $\mathcal{F} = (F, I, \leq, R, \hatt)$ with $N$ being the set of nodes and $\lambda: N^2 \to F$ is the labelling function. The network is said to be \emph{consistent} if and only if for all $x,y \in N$ we have $\lambda(x,y) = \widehat{\lambda(y,x)}$ and for all $x,y,z \in N$ we have $R(\lambda(x,y), \lambda(x,z), \lambda(z,y))$.

    We say for two frame networks $\N = (N, \lambda),\; \N' = (N', \lambda')$ that $\N \subseteq \N'$ if and only if $N \subseteq N'$ and $\lambda = \lambda'\restriction_{N^2}$ where $\restriction$ denotes the restriction of the function to the domain in the subscript.
\end{definition}

\begin{definition}
    An infinite length \emph{frame game} $G(\mathcal{F})$ where $\mathcal{F} = (F, I, \leq, R, \hatt)$ is a relevance frame is defined for two players $\forall$ and $\exists$.

    The game starts with $\forall$ picking a join-irreducible $a$ and $\exists$ must return a frame network $\N_0 = (N_0, \lambda_0)$ such that there exists $x,y \in N_0$ such that $\lambda_0(x,y) = a$.

    At the $i$th move for $0 < i < \omega$ $\forall$ picks a pair $x,y \in N_{i-1}$ and a pair of join-irreducibles $a,b$ such that $R(\lambda(x,y), a, b)$ and for all $a' \leq a, b' \leq b \in N_{i-1}$ if $R(\lambda(x,y), a', b')$ then $a=a'$ and $b=b'$. $\exists$ must return a network $\N_i = (N_i, \lambda_i)$ such that $\N_{i-1} \subseteq \N_i$ and $\exists z \in N_i$ such that $\lambda(x,z) = a, \lambda(z,y) = b$.

    $\forall$ wins if and only if $\exists$ returns an inconsistent network at any point in the game.
\end{definition}

\begin{proposition}
    $\exists$ has a winning strategy for $G(\mathcal{F}(\A))$ if and only if she has a winning strategy for $\Gamma_\omega(\A)$.
\end{proposition}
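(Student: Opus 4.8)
The plan is to route both implications through representability, using facts already in hand: $\exists$ wins $\Gamma_\omega(\A)$ iff $\A$ is representable (Proposition~\ref{prop:repgame}), and $\A$ is representable iff the algebra $\mathcal B$ defined by the canonical frame $\F(\A)$ is representable. Since $\mathcal B$ is frame-definable it is perfect (every element is a join of completely join-irreducibles), so the argument of Proposition~\ref{prop:labelirr} applies to it: each set $\{s\mid (x,y)\in h(s)\}$ is a prime filter of $\mathcal B$, hence principal, so every pair in every representation of $\mathcal B$ has a unique join-irreducible label lying in $F$, the set of join-irreducibles of $\mathcal B$. It therefore suffices to show that $\exists$ wins $G(\F(\A))$ iff $\mathcal B$ is representable.

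For ``$\Leftarrow$'', fix a representation $h$ of $\mathcal B$ over a poset $\m X$ and let $\exists$ play by reading labels off $h$: her node sets are finite subsets of $X$ and $\lambda(u,v)$ is the join-irreducible label of $(u,v)$. On $\forall$'s opening move with join-irreducible $a$, injectivity of $h$ forces some pair to have label exactly $a$ (in a perfect lattice a completely join-irreducible $a$ not realised as a label would give $h(a)=h(a^-)$ for its unique lower cover $a^-$, contradicting injectivity). On a split move at $(x,y)$ with minimal $a,b$ such that $R(\lambda(x,y),a,b)$, we have $\lambda(x,y)\le a;b$, hence a midpoint $w\in X$ with $(x,w)\in h(a)$ and $(w,y)\in h(b)$; writing $a^\ast\le a,b^\ast\le b$ for the labels of $(x,w)$ and $(w,y)$, one gets $\lambda(x,y)\le a^\ast;b^\ast$, i.e.\ $R(\lambda(x,y),a^\ast,b^\ast)$, so minimality of $\forall$'s choice forces $a^\ast=a$ and $b^\ast=b$. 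Every network $\exists$ produces is consistent because $h$ is a homomorphism into $\m{wk}(\m X)$: $R$ encodes composition, $\hatt$ encodes the coordinate swap together with complementation.

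For ``$\Rightarrow$'', repeat the limit construction of Proposition~\ref{prop:repgame} with frame networks in place of element networks. Assuming $F$ countable, for each non-$\bot$ join-irreducible $a$ let $\forall$ open with $a$ and eventually play every legal split move; let $\N^a_\omega$ be the limit frame network and $\equiv$ the relation identifying $x,y$ when $\lambda^a_\omega(x,y),\lambda^a_\omega(y,x)\le 1$, which $\exists$ can make an equivalence by keeping diagonal labels in $I$. Set $h^a(s)=\{([x]_\equiv,[y]_\equiv)\mid \lambda^a_\omega(x,y)\le s\}$ and $h=\dot\bigcup_a h^a$. Injectivity comes from initialisation; $\cdot$ and $+$ are represented because labels are join-prime and $\mathcal B$ is distributive; $\simm$ is represented because $c\le\simm s\iff \hat c\nleq s$ for join-irreducible $c$, together with $\lambda^a_\omega(x,y)=\widehat{\lambda^a_\omega(y,x)}$ by consistency; $;$ is represented because consistency gives $R(\lambda^a_\omega(x,y),\lambda^a_\omega(x,z),\lambda^a_\omega(z,y))$ and the scheduled split moves supply the missing midpoints; and $h(1)$ is a partial order turning every $h(s)$ into a weakening relation, by axioms~(\ref{it:lel})--(\ref{it:uucls}) of Definition~\ref{def:frame} together with unitality of $\mathcal B$. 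The uncountable case follows by downward L\"owenheim--Skolem, exactly as in Proposition~\ref{prop:repgame}.

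I expect the main obstacle to be the bookkeeping in ``$\Rightarrow$'': verifying that the limit frame network yields a genuine poset and that each $h(s)$ is closed under pre- and post-composition by $h(1)$, which is where the frame axioms of Definition~\ref{def:frame} must be used with care and where one must check that $\exists$ can and does keep the diagonal $I$-labelled throughout. The ``$\Leftarrow$'' direction is comparatively routine; the one point worth noticing is that the minimality clause built into the split move is precisely what matches the join-irreducible labels one reads off a representation.
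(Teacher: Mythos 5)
The load-bearing step in your ``$\Leftarrow$'' direction is false. From ``$\{s\mid(x,y)\in h(s)\}$ is a prime filter of $\mathcal{B}$'' you conclude ``hence principal'', but perfectness of $\mathcal{B}$ does not make every prime filter principal (equivalently, completely prime): a representation preserves only \emph{finite} joins and meets, so the label filter of an arbitrary pair is prime but need not be generated by a completely join-irreducible element (compare a non-principal ultrafilter of a powerset algebra, which is perfect). Proposition~\ref{prop:labelirr} is stated for finite algebras precisely because its proof takes the meet of \emph{all} join-irreducibles in the label set, which is unavailable here. As a result your strategy for $G(\F(\A))$ (``$\lambda(u,v)$ is the join-irreducible label of $(u,v)$'') is not well defined, and the split-move argument, which needs labels $a^\ast\le a$, $b^\ast\le b$ lying in $F$, breaks down. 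The natural repair is to label $(u,v)$ by the prime filter $\{s\in A\mid (u,v)\in h(s)\}$ of $\A$ itself, which is literally an element of the carrier of $\F(\A)$, and then run your minimality argument; your principal-label device is sound only for the initialisation pair (a pair in $h(a)\setminus h(a_\ast)$ does have label ${\uparrow}a$). A secondary problem sits in ``$\Rightarrow$'': you assume $F$ countable and invoke downward L\"owenheim--Skolem ``as in Proposition~\ref{prop:repgame}'', but $F$ consists of the prime filters of $\A$ and is typically uncountable even for countable $\A$, and in Proposition~\ref{prop:repgame} L\"owenheim--Skolem is applied to $\A$, not to $\F(\A)$ or $\mathcal{B}$, so the scheduling and limit construction at the level of $\mathcal{B}$ needs its own justification.

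Note also that the paper's own proof avoids the detour through representability of the canonical extension altogether: it translates strategies directly between the two games, using the fact that the limit network of a scheduled winning play of $\Gamma_\omega(\A)$ has prime-filter labels (so $\exists$ can read off her frame-game responses from the play initialised with $a\nleq\kappa(a)$), and conversely answering an initialisation $s\nleq t$ of $\Gamma_\omega(\A)$ by a prime filter $a$ with $a\le s$, $a\nleq t$ and answering witness moves via the minimal prime-filter pairs supplied by the frame game. If you fix the labelling as above, your representability route can be made to work, but as written the ``hence principal'' step is a genuine gap rather than a cosmetic one.
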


\begin{proof}
    It suffices to prove that she has a winning strategy for the play where all moves are called eventually. Thus if she has a winning strategy for $\Gamma_\omega(\A)$, we know that the limit network will have the relevant prime filters as labels. Thus if $a$ is the initial join-irreducible $\exists$ can map all her moves from the limit network of the play where the initialisation pair was $a \nleq \kappa(a)$.

    For the converse, assume she has a winning strategy for $G(\mathcal{F}(\A))$. To respond to the initialisation move with $s \nleq t$ there will exist a join-irreducible $a$ such that $a \leq s$ but $a \nleq t$ or rather $t \leq \kappa(a)$ so returning the initial network for $a$ will ensure that $a \leq s$ and $\hat{a} \leq \tilde t$. Any witness move called can be responded to by minimal join-irreducible pairs, which makes any other witness moves called by $\forall$ redundant.
\qed\end{proof}

We now define for every $2 \leq n \leq \omega$ the $n$-pebble equivalent version of the frame game as follows.

\begin{definition}
    The $n$-pebble infinite move game $G^n(\F)$ for a frame $\F$ is defined exactly as $G(\F)$, except before $\forall$ calls a witness move, he takes $\N' \subseteq \N_{i-1}$ such that $|N'| \leq n$ and then proceeds to call the witness move.
\end{definition}

    In particular, the frame game $G$ is equivalent to $G^\omega$.
Next we define $\wkRA_n$ and $\wkRA$ analogous to  $\RA_n$, the variety of all $n$-dimensional relation algebras, and $\RA$, the variety of all (4-dimensional) relation algebras \cite{Mad1983}, \cite{hirsch2002relation}.

\begin{definition}
    The class $\wkRA_n$ is the class of all bounded cyclic involutive unital $d\ell$-magmas $\A$ for which $\exists$ has a winning strategy for $G^n(\F(\A))$.
    The class of weakening relation algebras $\wkRA$ is defined as $\wkRA_4$.
\end{definition}

   It follows that $\wkRA_\omega$ is equivalent to $\RwkRA$
    and $\wkRA_\omega \subseteq \hdots \subseteq \wkRA_4 \subseteq \wkRA_3 \subseteq \wkRA_2$.

\section{Axiomatisation of the abstract classes}

In this section we provide finite axiomatisations for $\wkRA_2$ and $\wkRA_3$. We leave open the problem of whether, similarly to $\RA_4$ the axiomatisation for $\wkRA_4$ consists of axioms of $\wkRA_3$ and associativity of $;$.

We begin by axiomatising $\wkRA_2$. This will be done using the axiomatisation of bounded cyclic involutive unital $d\ell$-magmas together with the theory $\Phi_2$, defined below.

\begin{definition}
    Let $\Phi_2$ be the first order theory given by the following quasiequations:
    \begin{enumerate}
        \item $s \cdot \simm s \leq 0$
        \item $s \leq t \Leftrightarrow s;\simm t \cdot 1 \leq 0 $
        \item $s \leq t{;}u \wedge s{;}t \leq \simm u\Rightarrow s \cdot 1 \leq 0$
        \item $s \leq t{;}u \wedge u{;}s \leq \simm t \Rightarrow s \cdot 1 \leq 0$
        \item $s \leq t{;}u \wedge (s \cdot 1 \cdot t{;}v) + (1 \cdot s \cdot \simm v{;}u) \leq 0 \Rightarrow s \cdot 1 \leq 0$
    \end{enumerate}
\end{definition}

Before we prove the soundness and completeness, we introduce a ternary predicate for the language of frames $R^{\min}$ from the equivalence below.
$$R^{\min}(a,b,c) \Leftrightarrow R(a,b,c) \wedge \forall b',c': (R(a,b',c') \wedge b' \leq b \wedge c' \leq c \Rightarrow b'=b \wedge c'=c )$$
Note that since the union of a chain of prime filters is again a prime filter, frames of the form $\F(\A)$ have the property that $R(a,b,c)$ can be refined to $R^{\min}(a,b',c')$ for some prime filters $b'\le b$ and $c'\le c$.

\begin{lemma}
    \label{lem:phi2frame}
    Let $\A$ be a bounded cyclic involutive unital $d\ell$-magma. $\A \models \Phi_2$ if and only if $\F(\A)$ satisfies
    \begin{enumerate}
        \item \label{it:w2id} $\forall a \exists b: I(b) \wedge \hat{b}=b \wedge R(b,a, \hat{a})$
        \item \label{it:w2r1} $\forall a,b: I(a) \wedge \hat{a}=a \wedge R(a,b,\hat{b}) \Rightarrow R(b,a,b)$
        \item \label{it:w2r2} $\forall a,b: I(a) \wedge \hat{a}=a \wedge R(a,b,\hat{b}) \Rightarrow R(\hat{b},\hat{b},a)$
        \item \label{it:w2r3} $\forall a,b,c: I(a) \wedge \hat{a}=a \wedge R^{\min}(a,b,c) \Rightarrow b = \hat{c}$
    \end{enumerate}
\end{lemma}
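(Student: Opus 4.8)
The plan is to carry out everything inside the prime-filter presentation of $\F(\A)$: the points of $\F(\A)$ are the prime filters of $\A$, with $U\le V$ iff $V\subseteq U$, $\hat{U}=\{\simm s\mid s\notin U\}$, $I(U)$ iff $1\in U$, and $R(U,V,W)$ iff $v{;}w\in U$ for all $v\in V$, $w\in W$. First I would record three translation facts. (a) \emph{Separation}: an inequality $p\le q$ fails in $\A$ iff some prime filter contains $p$ but not $q$; in particular $s\le t$ iff every prime filter containing $s$ contains $t$. (b) Applying (a) to $s\cdot 1\le 0$: it holds iff no prime filter $U$ has $s\in U$, $1\in U$ and $0=\simm 1\notin U$; since $\hat{U}=U$ together with $1\in U$ forces $\simm 1\notin U$, the diagonal points of $\F(\A)$ (those $U$ with $I(U)$ and $\hat{U}=U$) are exactly the ``identity'' witnesses occurring in (\ref{it:w2id})--(\ref{it:w2r3}). (c) \emph{Minimal refinement}: by the remark preceding the lemma, in $\F(\A)$ every instance of $R(a,b,c)$ refines to $R^{\min}(a,b',c')$ with $b'\le b$, $c'\le c$, because a union of a chain of prime filters is again a prime filter, so Zorn's lemma applies to the family of filters still witnessing $R(a,\cdot,\cdot)$.

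With these in hand I would prove the two directions separately. For ``$\A\models\Phi_2\Rightarrow\F(\A)$ satisfies (\ref{it:w2id})--(\ref{it:w2r3})'', the clauses (\ref{it:w2r1}), (\ref{it:w2r2}), (\ref{it:w2r3}) are universal statements about prime filters; I would unfold each with (a)--(c) — spelling out $R$, $R^{\min}$, $\hatt$ — so that the hypothesis ``$I(a)\wedge\hat{a}=a$'' supplies a diagonal point $U$, the hypothesis $R(a,b,\hat{b})$ (resp.\ $R^{\min}$) yields algebraic facts about members of the filters, and the algebraic conclusion of the matching quasiequation (an inequality $\le 0$, or $b=\hat{c}$) reproduces the frame conclusion, with (\ref{it:w2r1}) coming from $\Phi_2$(3), (\ref{it:w2r2}) from $\Phi_2$(4), and (\ref{it:w2r3}) from $\Phi_2$(5). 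The delicate case is (\ref{it:w2id}), the only existential clause: given a point $a$ I must \emph{build} a diagonal identity point $b$ with $R(b,a,\hat{a})$. Here I would take $b$ to be a prime filter extending the filter generated by $\{v{;}\simm t\mid v\in a,\ t\notin a\}$ and avoiding a suitable ideal, then verify $1\in b$ from $\Phi_2$(2), $\hat{b}=b$ from $\Phi_2$(1), and $R(b,a,\hat{a})$ from the defining inclusion; $\Phi_2$(2) is what guarantees the generated filter is proper enough to be extended while still meeting $1$.

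For the converse, assuming $\F(\A)$ satisfies (\ref{it:w2id})--(\ref{it:w2r3}), it suffices by (a)--(b) to check each quasiequation of $\Phi_2$ pointwise on prime filters. For instance for $\Phi_2$(3): suppose $s\le t{;}u$ and $s{;}t\le\simm u$ while some prime filter witnesses $s\cdot 1\not\le 0$; using (\ref{it:w2id}) I would pass to a genuine diagonal point $U$ still containing $s$, use $s\le t{;}u$ and (c) to produce $b\le t$, $c\le u$ with $R^{\min}(U,b,c)$, apply (\ref{it:w2r1}) to obtain a forced triangle, and derive a contradiction with $s{;}t\le\simm u$ using $\hat{\hat{a}}=a$ and $\hat{U}=U$. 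The cases $\Phi_2$(4) and $\Phi_2$(5) are the same argument with (\ref{it:w2r2}) and (\ref{it:w2r3}), and $\Phi_2$(1),(2) drop out of (\ref{it:w2id}) together with the explicit forms of $\hatt$ and $R$.

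The main obstacle is condition (\ref{it:w2id}): it is the single existential requirement, and making it work means giving the filter-extension construction of the diagonal identity point and then checking all of $I(b)$, $\hat{b}=b$ and $R(b,a,\hat{a})$ against $\Phi_2$(1) and $\Phi_2$(2); everything else is bookkeeping in unfolding $R$, $R^{\min}$ and $\hatt$. A secondary point needing care is that the frame order is reverse set-inclusion, so ``minimal'' in $R^{\min}$ corresponds to ``set-maximal'' among prime filters, which is what makes the Zorn argument in (c) legitimate.
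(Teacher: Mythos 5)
Your proposal is correct and takes essentially the same route as the paper's proof: the same pairing of $\Phi_2$(3)--(5) with conditions (\ref{it:w2r1})--(\ref{it:w2r3}), the same chain/Zorn refinement to $R^{\min}$, and the same construction of the self-$\hatt$ identity witness for (\ref{it:w2id}), with $\Phi_2$(2) supplying properness and $\Phi_2$(1) forcing $\hat{b}=b$. The only difference is presentational: the paper writes the argument in terms of join-irreducibles and $\kappa$, while you carry it out directly with the prime filters of $\F(\A)$.
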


\begin{proof}
    For the left to right implication, observe that for any join-irreducible $a$, we know that $a \nleq \kappa(a)$ so $(a{;}~\simm \kappa(a) \cdot 1) \nleq 0$ by $\Phi_2(2)$. Thus there must exist a join-irreducible $b \leq 1, b \nleq 0, b \leq a{;}\hat{a}$. Suppose $b \neq \hat{b}$. Then there would exist some $b \leq s, \hat{b} \nleq s$. Because $\hat{\hat{b}} = b$ we know that $b \leq \simm s$ and thus $b \leq s \cdot \simm s \leq 0$, contradicting $\Phi_2(1)$ and we have proven (\ref{it:w2id}) follows from $\Phi_2$. For (\ref{it:w2r1}) assume we have $a \leq 1, \hat{a} = a$ then $a \nleq \simm 1 = 0$. Thus $a = a \cdot 1 \nleq 0$ and $a \leq b{;}\simm \kappa(b)$ implies $a{;}b \nleq \simm \simm \kappa(b)$ or simply $a \leq a{;}b$. By a similar argument we get (\ref{it:w2r2}). Finally if $a \leq b{;}c$ and $a = a \cdot 1 \leq b{;}c \cdot 1 \nleq 0$ we have $b \nleq \simm c$. We also have that $a \cdot 1 = a \nleq 0$ and $a \leq b{;}c$ so $a \cdot \kappa(b){;}c \nleq 0$ or $a \cdot b{;}\hat{b} \nleq 0$. In the former case that means that $\kappa(b){;}c \nleq 0$ and thus $\kappa(b) \leq \simm c$ or $c \leq \hat{b}$ and we are done. In the latter case it means that there exists a join-irreducible $a' \leq a$ such that $a \nleq 0$ and thus $a' = \hat{a'}$ as well as $a' \leq b;\hat{b} \leq b;c$ by monotinicity. Because $a' \leq a \leq \hat{a'} = a'$ we have $a=a'$ and by minimality $\hat{b}=c$.

    For the right to left implication note that if $s \cdot \simm s \nleq 0$ then there exists some $a \leq s \cdot \simm s$ not below $0$. Thus $\hat{a} \leq 1$ and we know there exists a join-irreducible $b = \hat{b}, b \leq a;\hat{a} \leq (s \cdot \simm s){;}1 = s \cdot \simm s$ and that contradicts $b = \hat{b}$. Assume $s \nleq t$. That is true if there exists a join-irreducible $a$ such that $a \leq s, \hat{a} \leq \simm t$. Thus there exists a join-irreducible $b = b \cdot 1 \nleq 0$ below $1 \cdot a;\hat{a} \leq 1 \cdot s{;}\simm t$ and we conclude $1 \cdot s{;}\simm t \nleq 0$. If $1 \cdot s{;}\simm t \nleq 0$ then there exist join-irreducibles $a,b,c$ such that $I(a), a = \hat{a}, b \leq s, c \leq \simm t$ and $b,c$ also being minimal and hence $\hat{b} = c$. Therefore $\hat{b} \leq \simm t$ or simply $s \nleq t$. $s \cdot 1 \nleq 0 \wedge s \leq t{;}u \Rightarrow s{;}t \nleq \simm u$ follows directly from (\ref{it:w2r1}) and its dual directly from (\ref{it:w2r2}). Finally $s \cdot 1 \nleq 0$ and $s \leq t;u$ iff there exist some $a,b,c$ in the corresponding frame such that $a \leq s \cdot 1, b \leq t, c \leq u, I(a), \hat{a}=a, R^{\min}(a,b,c)$ and thus $\hat{b} = c$ by (\ref{it:w2r3}). Observe that for every $v$ either $b \leq v$ or $\hat{b} \leq \simm v$ and thus $a \leq t;v$ or $a \leq \simm v{;}u$ and the join of the two terms is not below $0$. 
\qed\end{proof}

\begin{theorem}
    $\wkRA_2$ is axiomatised by the basic axioms for bounded cyclic involutive unital $d\ell$-magmas and $\Phi_2$.
\end{theorem}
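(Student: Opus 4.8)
The plan is to strip the theorem down, via Lemma~\ref{lem:phi2frame}, to a statement about the two‑pebble frame game. Since $\wkRA_2$ is by definition the class of bounded cyclic involutive unital $d\ell$-magmas $\A$ on which $\exists$ has a winning strategy for $G^2(\F(\A))$, and since Lemma~\ref{lem:phi2frame} says that for such $\A$ the assertion $\A\models\Phi_2$ is equivalent to $\F(\A)$ satisfying the four frame conditions (\ref{it:w2id})--(\ref{it:w2r3}), it suffices to prove that $\exists$ has a winning strategy for $G^2(\F(\A))$ if and only if $\F(\A)$ satisfies (\ref{it:w2id})--(\ref{it:w2r3}). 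Throughout we use the observation recorded before Lemma~\ref{lem:phi2frame}: in a frame of the form $\F(\A)$ every instance of $R(a,b,c)$ refines to some $R^{\min}(a,b',c')$ with $b'\le b$ and $c'\le c$, so that $\forall$'s obligation to split edges minimally does not shrink the set of configurations the game can reach.

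For the implication from the frame conditions to a winning strategy, $\exists$ plays so as to maintain the invariant that her current frame network is consistent and that every diagonal label $\lambda(x,x)$ is an identity element fixed by $\hatt$. On the initialisation move with a join-irreducible $a$ she returns the two-node network with $\lambda(x,y)=a$, $\lambda(y,x)=\hat a$ and diagonal labels obtained by applying (\ref{it:w2id}) to $a$ and to $\hat a$; the monotonicity axioms (\ref{it:dncls})--(\ref{it:uucls}) of Definition~\ref{def:frame} together with (\ref{it:w2id}) then place every triangle of this network in $R$. On a witness move, $\forall$ has first shrunk the board to a subnetwork on at most two nodes, so $\exists$ only ever needs to keep a network on at most three nodes consistent. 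When $\forall$ asks to split an edge $(x,y)$ according to $R^{\min}(\lambda(x,y),a,b)$, $\exists$ adjoins a fresh node $z$ with $\lambda(x,z)=a$, $\lambda(z,x)=\hat a$, $\lambda(z,y)=b$, $\lambda(y,z)=\hat b$, and diagonal $\lambda(z,z)$ again supplied by (\ref{it:w2id}). The triangles created through $z$ are exactly the configurations controlled by (\ref{it:w2r1})--(\ref{it:w2r3}): when the split edge is a genuine diagonal $(x,x)$, condition (\ref{it:w2r3}) forces $b=\hat a$ so that $z$ sits consistently, while (\ref{it:w2r1}) and (\ref{it:w2r2}) close the remaining triangles through $z$; the pebble bound is precisely what keeps this list of closure conditions finite.

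For the converse, from a winning strategy for $\exists$ we read off each frame condition by confronting the strategy with a scripted $\forall$-play. Condition (\ref{it:w2id}) comes from the initialisation move: opening with $a$ forces an edge $(x,y)$ with $\lambda(x,y)=a$, whose source carries a diagonal label $e$ with $\hat e=e$ and $R(e,a,\hat a)$ by consistency, and a preliminary normalisation of $\exists$'s strategy --- that if she wins at all she wins while using identity elements as diagonal labels --- lets us take $I(e)$. Conditions (\ref{it:w2r1})--(\ref{it:w2r3}) are obtained by continuing such a play with a single witness move that realises the hypothesis of the implication in question, and then observing that consistency of $\exists$'s answer is exactly its conclusion. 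Assembling the two implications and invoking Lemma~\ref{lem:phi2frame} then gives the theorem.

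I expect the soundness direction to be the principal obstacle: one must verify that the labelling $\exists$ is instructed to produce keeps \emph{every} triangle of the (at most three-node) network in $R$, organising the check around the case distinctions of Lemma~\ref{lem:phi2frame} --- diagonal versus non-diagonal split edge, and the side on which the identity element is inserted --- and in particular showing that the all-diagonal triangles $R(e,e,e)$ close up, which forces one to pin down exactly which $\hatt$-fixed identity elements $\exists$ may safely use as diagonal labels. The matching normalisation on the completeness side --- that a winning strategy can always be taken to use such diagonal labels --- is the other delicate point; the rest is bookkeeping with the frame axioms of Definition~\ref{def:frame}.
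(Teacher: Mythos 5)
Your overall route is the paper's route: use Lemma~\ref{lem:phi2frame} to trade $\Phi_2$ for the frame conditions (\ref{it:w2id})--(\ref{it:w2r3}), and then prove that these conditions hold exactly when $\exists$ wins the two-pebble frame game on $\F(\A)$. The direction from a winning strategy to the frame conditions (your ``completeness'') is essentially the paper's argument, obtained by scripting $\forall$'s initialisation and one witness move.

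There is, however, a genuine gap in your other direction. You let $\forall$ call a witness on a \emph{non-reflexive} edge of a two-node network, producing a three-node network, and you assert that (\ref{it:w2r1}) and (\ref{it:w2r2}) ``close the remaining triangles through $z$.'' They cannot: every condition in (\ref{it:w2r1})--(\ref{it:w2r3}) carries the hypothesis $I(a)\wedge\hat a=a$, so they only govern triangles passing through a diagonal label. A genuine three-node triangle built from $R^{\min}(\lambda(x,y),a,b)$ with $\lambda(x,y)$ non-diagonal forces consistency conditions such as $R(\widehat{\lambda(x,y)},\hat b,\hat a)$ and $R(a,\lambda(x,y),\hat b)$, i.e.\ Peircian-style rotations, and these are exactly what the extra axioms of $\Phi_3$ (rotation and Maddux's axiom, yielding conditions (\ref{it:wk3r1})--(\ref{it:wk3i3})) are introduced for in the $\wkRA_3$ theorem; they do not follow from $\Phi_2$, otherwise $\wkRA_2$ and $\wkRA_3$ would need the same axioms. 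The point you are missing is the reading of the pebble bound that the paper's proof uses: in $G^2$ the network never exceeds two nodes, so a witness move can only be called on a reflexive edge $(x,x)$ (with the fresh node $z$ making the second node), and every triangle that ever arises passes through a diagonal. That restriction, not ``keeping the list of closure conditions finite,'' is precisely why $\Phi_2$ suffices; with your off-by-one reading of the game the claimed strategy for $\exists$ breaks at the first non-reflexive witness. The remaining delicate point you flag (that diagonal labels can be taken to be $\hatt$-fixed identity elements) is handled in the paper not by a normalisation of strategies but by observing that consistency of the network itself forces $\lambda(x,x)=\widehat{\lambda(x,x)}$ and the triangle $R(\lambda(x,x),a,\hat a)$, with (\ref{it:w2id}) supplying the label $\exists$ actually plays.
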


\begin{proof}
    By Lemma~\ref{lem:phi2frame} this axiomatisation is equivalent to the frame conditions, enumerated (\ref{it:w2id})--(\ref{it:w2r3}). First we show these are sound for the two pebble game. If there existed a join-irreducible $a$ with no $b, I(b), \hat{b}=b$ with $R(b,a,\hat{a})$, then $\forall$ would win on initialisation with $a$ because if $\lambda(x,y) = a$, no consistent $b$ would exist for $\lambda(x,x)$. We show (\ref{it:w2r3}) next. If this didn't hold for some $a,b,c$ then $\forall$ could start by asking $a$ on initial move. By order reversing of $\hatt$ and the identity, $a$ is the only join-irreducible to be set as $\lambda(x,x)$ where $\lambda(x,y)=a$. For the second move, $\forall$ calls the witness $b;c$ on $(x,x)$ and we get an inconsistency because $b \neq \hat{c}$. For (\ref{it:w2r1}) and (\ref{it:w2r2}) see that if $R(a,b,\hat{b})$ we have $R^{\min}(a,b,\hat{b})$ by order reversing properties of $\hatt$ and (\ref{it:w2r3}). Thus if $\forall$ again starts by forcing $\lambda(x,x)=a$ then calling the witness $b;\hat{b}$ then both $R(b,a,b), R(\hat{b},\hat{b},a)$ must hold to keep the network consistent.

    To show completeness, it suffices to say that $\exists$ can respond to any initialisation with $a$ by returning a network with two nodes $x,y$ with $\lambda(x,y) = a, \lambda(y,x) = \hat{a}$ and by (\ref{it:w2id}) there exists a $b$ for $a$ and $b'$ for $\hat{a}$ to be set as $\lambda(x,x)$ and $\lambda(y,y)$ respectively and by (\ref{it:w2r1})(\ref{it:w2r2}) all other triangles are also consistent. A witness move can only be called on a reflexive node $(x,x)$ and that means that by (\ref{it:w2r1})(\ref{it:w2r2})(\ref{it:w2r3}) any witness will be consistent and by the same reasoning as with initialisation, $\exists$ can put a label on $\lambda(y,y)$ and keep the network consistent.
\qed\end{proof}

In order to axiomatise $\wkRA_3$ we only need to add two well known axioms as well as a set of quasiequations. The first axiom is called \emph{rotation} for involutive semirings 
and the second one was found by Maddux in \cite{Mad2022} as an axiom that holds for binary relations, but not for relevance logic frames.

\begin{definition}
    Let $\Phi_3$ be the first order theory containing all the formulas in $\Phi_2$ as well as 
    \begin{enumerate}
        \item $s;t \leq \simm u \Rightarrow t;u \leq \simm s$
        \item $s \cdot t{;}u \leq ((s{;}v) \cdot t){;}u + t{;}(u \cdot \simm v)$
        \item $1 \cdot \simm s'{;}s \cdot t{;}\simm t' \leq 0 \Rightarrow s{;}t \leq (s \cdot s'){;}t + s{;}(t \cdot t')$
        \item $1 \cdot s \cdot 0 = \bot \Rightarrow (s\cdot 1){;}(t{;}u) \leq ((s \cdot 1){;}t){;}u $
        \item $1 \cdot u \cdot 0 = \bot \Rightarrow (s{;}t){;}(u \cdot 1) \leq s{;}(t{;}(u \cdot 1))$
    \end{enumerate}
\end{definition}

\begin{lemma}
    \label{lem:phi3frame}
    Let $\A$ be a bounded cyclic involutive unital $d\ell$-magma. $\A \models \Phi_3$ if and only if for $\F(\A)$ all the formulas from Lemma~\ref{lem:phi2frame} hold as well as
        \begin{enumerate}
        \item \label{it:wk3r1} $\forall a,b,c: R^{\min}(a,b,c) \Rightarrow R(b,a,\hat{c})$
        \item \label{it:wk3r3} $\forall a,b,c:  R(a, \hat{b}, \hat{c}) \Rightarrow R(b, \hat{c}, \hat{a})$
        \item \label{it:wk3i1} $\forall a,b,c: R^{\min}(a,b,c) \Rightarrow \exists d : d = \hat{d} \wedge I(d) \wedge R(d,\hat{b},b) \wedge R(d, c, \hat{c})$
        \item \label{it:wk3i2} $\forall a,b,c,d: d = \hat{d} \wedge I(d) \wedge R(a,d,a) \wedge R^{\min}(a,b,c) \Rightarrow R(b,d,b)$
        \item \label{it:wk3i3} $\forall a,b,c,d: d = \hat{d} \wedge I(d) \wedge R(a,a,d) \wedge R^{\min}(a,b,c) \Rightarrow R(c,c,d)$
    \end{enumerate}
\end{lemma}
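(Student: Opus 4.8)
The plan is to mirror the proof of Lemma~\ref{lem:phi2frame}. Since $\Phi_3$ extends $\Phi_2$ by exactly five new quasiequations and, by Lemma~\ref{lem:phi2frame}, $\A\models\Phi_2$ is already equivalent to $\F(\A)$ satisfying conditions (\ref{it:w2id})--(\ref{it:w2r3}), it suffices to show, \emph{working over $\Phi_2$} (i.e.\ assuming these four conditions already hold of $\F(\A)$), that each new axiom of $\Phi_3$ is equivalent to the matching new frame condition among (\ref{it:wk3r1})--(\ref{it:wk3i3}). I would therefore split the argument into five small ``axiom $\Leftrightarrow$ condition'' steps, reusing the dictionary already set up: the carrier of $\F(\A)$ is the set of prime filters of $\A$, which plays the role of join-irreducibles; $R(a,b,c)$ means $a\le b;c$; $a\nleq s$ means $\hat a\le\simm s$, equivalently $s\le\kappa(a)$; and, by the remark preceding the lemma, every triangle $R(a,b,c)$ in $\F(\A)$ refines to some $R^{\min}(a,b',c')$ with $b'\le b$, $c'\le c$, so $R^{\min}$ is exactly the ``forced witness'' relation of the pebble game.

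For the direction $\A\models\Phi_3\Rightarrow$ frame conditions, I would take an arbitrary minimal triangle witnessing the hypothesis of a frame condition and instantiate the appropriate quasiequation at suitably chosen join-irreducibles (prime filters of $\A$) and their $\kappa$'s, then read the conclusion back off via the dictionary. For example, from $R^{\min}(a,b,c)$ — that is, $a\le b;c$ with $b,c$ minimal — rotation $\Phi_3(1)$ applied to a suitable instance (using $a\nleq\kappa(a)$ to keep the relevant meets with $1$ or with a composite above $0$) yields $b\le a;\hat c$, i.e.\ (\ref{it:wk3r1}); Maddux's axiom $\Phi_3(2)$ supplies the Peircean-style rotation (\ref{it:wk3r3}), which is precisely the separating instance he identified between binary relations and relevance frames; and $\Phi_3(3)$ together with the two side-conditioned associativity axioms $\Phi_3(4),(5)$ produce the diagonal-element conditions (\ref{it:wk3i1})--(\ref{it:wk3i3}). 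Here the hypotheses $1\cdot s\cdot 0=\bot$ in $\Phi_3(4),(5)$ are to be read as ``$s$ has no join-irreducible below it that is both an identity ($I$) and self-paired ($\hat d=d$)'', exactly the $d=\hat d\wedge I(d)$ side conditions in the frame statements.

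For the converse, given the new frame conditions I would verify each quasiequation via the prime filters of $\A$: an inequality $u\le v$ holds in $\A$ iff every prime filter $a$ below $u$ lies below $v$, so I reduce a putative counterexample $a\le u$, refine any triangle it produces to a minimal one via $R^{\min}$-refinement, apply the matching frame condition, and conclude $a\le v$. The one subtlety is translating $\simm$ of a composite, as in $\Phi_3(1),(3)$: $a\le\simm(s;t)$ holds iff $\hat a$ does not lie above $s;t$, i.e.\ no triangle $R(\hat a,b,c)$ has $b\le s$ and $c\le t$; combined with $R^{\min}$ this negation is exactly what the $R$-rotation conditions (\ref{it:wk3r1}),(\ref{it:wk3r3}) govern, so the implication goes through.

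The main obstacle I expect is the bookkeeping around $R^{\min}$ and the interplay of the diagonal side conditions ($I$, $\hat d=d$) with the associativity-flavoured axioms $\Phi_3(4),(5)$: one must check that the minimality built into $R^{\min}(a,b,c)$ is used neither too strongly nor too weakly relative to what the quasiequations actually deliver, and that, over $\Phi_2$, the hypothesis $1\cdot s\cdot 0=\bot$ is genuinely equivalent to the absence of a self-paired identity join-irreducible below $s$. The rest is routine translation along the dictionary, closely parallel to the $\Phi_2$ case already carried out.
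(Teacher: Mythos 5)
Your overall route---reuse the prime-filter dictionary from Lemma~\ref{lem:phi2frame}, refine triangles via $R^{\min}$, and match the new axioms of $\Phi_3$ to the new frame conditions over $\Phi_2$---is indeed how the paper proceeds, but your assignment of axioms to conditions is swapped exactly where the real work lies, and the step as you describe it would fail. Rotation $\Phi_3(1)$ corresponds to the cyclic condition (\ref{it:wk3r3}): $a\nleq\hat b;\hat c$ unpacks to $\simm\kappa(b);\simm\kappa(c)\le\kappa(a)$, and rotating this makes $R(a,\hat b,\hat c)$, $R(b,\hat c,\hat a)$, $R(c,\hat a,\hat b)$ equivalent. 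It cannot yield (\ref{it:wk3r1}): with plain $R$ in place of $R^{\min}$, (\ref{it:wk3r1}) is the Peircean triangle law, which the paper explicitly says fails for weakening relation frames, and since relevance frames satisfy $\Phi_2$ plus rotation but not Maddux's axiom (this is precisely why $\Phi_3(2)$ is included), rotation over $\Phi_2$ cannot prove (\ref{it:wk3r1})---otherwise, via the converse direction of the lemma, it would prove $\Phi_3(2)$ as well. The actual derivation of (\ref{it:wk3r1}) is the key idea of the whole lemma and is absent from your sketch: instantiate $\Phi_3(2)$ at $s=a$, $t=b$, $u=c$ and, crucially, $v=\hat c$ (so $\simm v=\kappa(c)$), giving $a\le(a;\hat c\cdot b);c+b;(c\cdot\kappa(c))$; minimality of $c$ discards the second summand and minimality of $b$ forces $a;\hat c\cdot b=b$, i.e.\ $R(b,a,\hat c)$. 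Dually, your claim that the positive inequality $\Phi_3(2)$ ``supplies'' (\ref{it:wk3r3}) is unsubstantiated; in the paper (\ref{it:wk3r3}) is obtained from $\Phi_3(1)$ and, conversely, $\Phi_3(1)$ is recovered from (\ref{it:wk3r3}).

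A second, smaller error: you read the hypothesis $1\cdot s\cdot 0=\bot$ in $\Phi_3(4),(5)$ as saying $s$ has \emph{no} join-irreducible below it that is an identity and self-paired. It says essentially the opposite: no join-irreducible below $s$ lies below both $1$ and $0$, hence (by the $\Phi_2$ conditions) every join-irreducible below $s\cdot 1$ is a self-paired identity element. That is exactly what is used, in one direction, to verify the side condition when deriving (\ref{it:wk3i2}) and (\ref{it:wk3i3}), and, in the other, to produce the element $d$ with $\hat d=d$ and $I(d)$ from which the conditional associativity laws follow; with your reading these four steps do not go through as sketched. Finally, note the correspondence is not perfectly one-to-one in the converse direction (for instance the paper's proof of $\Phi_3(2)$ from the frame conditions uses (\ref{it:wk3r1}) together with the dichotomy $\hat c\le v$ or $c\le\simm v$, and $\Phi_3(3)$ uses (\ref{it:wk3i1})), but that is bookkeeping; the genuine gaps are the swapped rotation/Maddux pairing and the misread side condition.
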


\begin{proof}
For the left to right implication of (\ref{it:wk3r1}) if $a,b,c$ are join-irreducibles with $a \leq b{;}c$ as well as the minimality condition for $b,c$ then see that $a = a \cdot b{;}c \leq (a{;}\hat{c} \cdot b){;}c + b{;}(c \cdot \kappa(c))$.  $c \cdot \kappa(c)$ is strictly below $c$ and due to minimality of $b,c$ for this composition $a \nleq b{;}(c \cdot \kappa(c))$. Thus $a \leq (a{;}\hat{c} \cdot b){;}c$ and again by minimality $a{;}\hat{c} \cdot b = b$ or simply $R(b,a,\hat{c})$. 
For (\ref{it:wk3r3}) observe that $a \nleq \hat{b};\hat{c}$ is the same as $\simm \kappa(b){;} \simm \kappa(c) \leq \kappa(a)$ and by rotate we get $\simm \kappa(c){;}\simm \kappa(a) \leq \kappa(b)$ and $\simm \kappa(a){;}\simm \kappa(b) \leq \kappa(c)$ so $R(a, \hat{b}, \hat{c}), R(b, \hat{c}, \hat{a}), R(c, \hat{a}, \hat{b})$  are equivalent.
For (\ref{it:wk3i1}) if $R^{\min}(a,b,c)$ then we know $b{;}c \nleq (b \cdot \kappa(b)){;}c + b{;}(c \cdot \kappa(b))$ and thus $1 \cdot \simm s'{;}s \cdot t{;}\simm t' \nleq 0$ and we can find a $d$ satisfying $I(d), \hat{d} = d, R(d,\hat{b},b), R(d, c, \hat{c})$. For (\ref{it:wk3i2}) see that $1 \cdot d \cdot 0 = \bot$ and thus $a \leq d{;}a \leq d{;}(b{;}c) \leq (d{;}b){;}c$. By minimality $b = d{;}b$. By a similar argument we get (\ref{it:wk3i3}).

For the right to left implication, if $s{;}t \leq \simm u$ observe that for all join-irreducibles $a,b,c$ such that $a \leq s, b \leq t, c \leq u$ we have $a{;}b \leq \kappa(\hat{c})$ and thus $\neg R(\hat{c},a,b)$ and by (\ref{it:wk3r3}) we have $\neg R(\hat{a},b,c)$ and thus $b{;}c \leq \kappa(\hat{a}) = \simm a$. If for all join-irreducibles $a,b,c$ below $s,t,u$ respectively that holds then $t{;}u \leq \simm s$. To show $s \cdot t{;}u \leq ((s{;}v) \cdot t){;}u + t{;}(u \cdot \simm v)$ take any $a \leq s \cdot t{;}u$ and some minimal $b,c$ witnessing the $t{;}u$ composition. Then all $v$ will either have $c \leq \simm v$ or $\hat{c} \leq v$, in either case the term is above $a$ by monotonicity.
Finally if $s{;}t \nleq (s \cdot s'){;}t \cdot s{;}(t \cdot t')$ it means that $s{;}t$ is non-empty and as such there exists some $a \leq s{;}t$ and some $R^{\min}(a,b,c)$ and as such $b \nleq s'$ and $c \nleq t'$ and thus $\hat{b};b \leq \simm s'{;}s$ and  $c{;}\hat{c} \leq t{;}\simm t'$ and there exists a $d \nleq 0$ such that $d \leq 1 \cdot \simm s'{;}s \cdot t{;}\simm t'$ and therefore the term cannot be below $0$. Take any join-irreducible $a \leq (s \cdot 1);t{;}u$. There will exist a self-$\hatt$ join-irreducible $d \leq s \cdot 1$ such that $d \leq d;a$ and a minimal $b,c$ below $t,u$ such that $a \leq b;c$ and so we have by (\ref{it:wk3i2}) $b \leq d;b$ and thus $a \leq b{;}c \leq (d{;}b){;}c \leq ((s \cdot 1){;}t){;}u$. The dual is shown similarly from  (\ref{it:wk3i3}).
\qed\end{proof}

\begin{theorem}\label{theo:wk2}
    $\wkRA_3$ is axiomatised by the basic axioms for bounded cyclic involutive unital $d\ell$-magmas and $\Phi_3$.
\end{theorem}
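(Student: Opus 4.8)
The plan is to follow the template of the $\wkRA_2$ proof. By Lemma~\ref{lem:phi3frame}, $\A\models\Phi_3$ precisely when $\F(\A)$ satisfies the conditions (\ref{it:w2id})--(\ref{it:w2r3}) of Lemma~\ref{lem:phi2frame} together with the new conditions (\ref{it:wk3r1})--(\ref{it:wk3i3}). Since membership of $\A$ in $\wkRA_3$ means exactly that $\exists$ has a winning strategy for $G^3(\F(\A))$, it suffices to show that this combined list of frame conditions is equivalent to the existence of such a strategy. The previous theorem already supplies the equivalence for the sublist (\ref{it:w2id})--(\ref{it:w2r3}) and the two-pebble game, so the remaining work is to see that the extra conditions are exactly what the third pebble buys.

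For soundness I would argue the contrapositive: if one of the new frame conditions fails at some tuple of join-irreducibles of $\F(\A)$, then $\forall$ wins $G^3$ using at most three live nodes. As in the $\wkRA_2$ argument, $\forall$ opens by forcing $\lambda(x,y)=a$ for the offending $a$, noting that $\hatt$ being order-reversing together with (\ref{it:w2id}) leaves him no choice in the reflexive label at $x$ nor in $\lambda(y,x)=\hat a$. For (\ref{it:wk3r1}) and (\ref{it:wk3r3}) he then calls a single witness move producing a node $z$ whose incident labels $\lambda(x,z)$, $\lambda(z,y)$ are pinned down by the $R^{\min}$-data; consistency of the triangles spanned by $\{x,y,z\}$ now forces the rotated instances $R(b,a,\hat c)$, resp.\ $R(b,\hat c,\hat a)$, so the violation is exposed. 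For (\ref{it:wk3i1})--(\ref{it:wk3i3}) he calls a witness on a reflexive node (first forcing, when needed, a reflexive node and one further node), and consistency of the triangle joining the new node to the forced identity label fails unless the corresponding implication holds. Each of these plays keeps at most three nodes alive, so it is legal in $G^3$.

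For completeness I would describe an $\exists$-strategy for $G^3(\F(\A))$ and argue she never reaches an inconsistent network. Initialisation is as in the $\wkRA_2$ proof: for an opening join-irreducible $a$ she plays the two-node network with $\lambda(x,y)=a$, $\lambda(y,x)=\hat a$, uses (\ref{it:w2id}) to pick, at $x$ and at $y$, reflexive labels $d$ with $I(d)$ and $\hat d=d$, and (\ref{it:w2r1}),(\ref{it:w2r2}) make the two mixed triangles consistent. When $\forall$ has restricted to a subnetwork $\N'$ with at most three nodes and calls a witness move on an edge $(x,y)$ with data $R^{\min}(\lambda(x,y),p,q)$, she returns $\N'$ together with a fresh node $z$ carrying $\lambda(x,z)=p$, $\lambda(z,y)=q$, $\lambda(z,x)=\hat p$, $\lambda(y,z)=\hat q$ (the latter two forced) and a reflexive label $d$ at $z$ with $I(d)$, $\hat d=d$ supplied by (\ref{it:wk3i1}); the at most one further node $w$ of $\N'$ is joined to $z$ by labels obtained from a suitable $R^{\min}$-refinement. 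One then checks every triangle among the at most four resulting nodes: the triangle $x,z,y$ holds by construction; the triangles obtained by permuting $x,z,y$ follow from (\ref{it:wk3r1}),(\ref{it:wk3r3}) and the closure properties of $R$ in Definition~\ref{def:frame}; the triangles relating $z$ to the old reflexive labels at $x$, $y$ and to the new one at $z$ are precisely what (\ref{it:wk3i1})--(\ref{it:wk3i3}) (with (\ref{it:w2r1}),(\ref{it:w2r2})) guarantee; and the triangles through $w$ reduce, using the already-established consistency of $\N'$, to instances again covered above. Because $G^3$ never lets $\forall$ present more than three old nodes at once, no further triangles arise, so the strategy survives.

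I expect the main obstacle to be precisely the bookkeeping in this last step: confirming that, with only the local data $R^{\min}(\lambda(x,y),p,q)$ and the prior consistency of the $\le 3$-node network $\N'$, the new node $z$ can be attached so that all of the triangles it spans with $x$, $y$, $w$ and with itself close up, and that the minimality clause in the witness move makes the choices of $p,q$ (hence of $\hat p,\hat q$, of the reflexive label at $z$, and of the edges to $w$) genuinely well defined. The rotation conditions (\ref{it:wk3r1}) and (\ref{it:wk3r3})---the latter being Maddux's axiom, which fails for general relevance frames---are exactly the devices that transport consistency of an old triangle around the new node, while (\ref{it:wk3i1})--(\ref{it:wk3i3}) are what make the reflexive/identity triangles go through; organising these into one uniform verification is the delicate part, and everything else parallels the $\wkRA_2$ proof.
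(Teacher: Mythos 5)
Your overall plan coincides with the paper's: reduce $\Phi_3$ to the frame conditions via Lemma~\ref{lem:phi3frame}, prove soundness by letting $\forall$ expose a violated condition with an initialisation plus one witness move, and prove completeness by describing an $\exists$-strategy; the soundness half is essentially the paper's argument. The genuine gap is in completeness, at exactly the step you yourself flag as ``bookkeeping'': attaching the new witness node $z$ to a third retained node $w$. You read the pebble restriction as allowing $\forall$ to keep three old nodes and then add a fresh witness, so your strategy must invent labels $\lambda(w,z)$, $\lambda(z,w)$ making every triangle through $w$ and $z$ close, and you assert this can be done ``by a suitable $R^{\min}$-refinement'', the triangles ``reducing'' to cases already covered. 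Nothing in $\Phi_3$ or in the conditions of Lemma~\ref{lem:phi3frame} yields such labels: gluing a witness onto one edge of a consistent triangle compatibly with the opposite vertex is a four-node amalgamation property, the very strength that separates $\wkRA_4$ from $\wkRA_3$; it would require at least associativity of $;$, which $\Phi_3$ does not contain and whose sufficiency for $\wkRA_4$ the paper explicitly leaves open. So this step is not routine verification --- as stated it is unsupported and, if it went through from $\Phi_3$ alone, it would prove far more than the theorem.

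The paper's proof never meets that configuration, because the intended $3$-pebble game (the analogue of $3$-dimensional relational bases, and the reading used in both the $\wkRA_2$ and $\wkRA_3$ proofs) keeps at most three live nodes in total. Hence only two witness situations arise: a witness $R^{\min}(a,b,c)$ on a non-reflexive edge of a two-node network, where both new edges $(x,z)$ and $(z,y)$ are forced and consistency of the resulting three-node network is checked from the rotation conditions (\ref{it:wk3r1}), (\ref{it:wk3r3}) together with (\ref{it:wk3i1})--(\ref{it:wk3i3}) and the $\wkRA_2$ conditions; and a witness on a reflexive edge $(x,x)$ with one other node $y$ present, the only case with a non-forced edge $(z,y)$. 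The paper resolves that case using the special structure available there --- the label on $(x,x)$ is a self-$\hat{\ }$ identity join-irreducible and $\hat{\ }$ is order reversing --- by refining $R(\lambda(x,y),a,c)$ or $R(\lambda(y,x),c,\hat a)$ to an $R^{\min}$ instance and then reusing the non-reflexive analysis; this argument depends on that structure and does not extend to an arbitrary extra node $w$. To repair your proof, adopt the bounded reading of the pebble game (at most three nodes after the witness is added) and replace the general $w$-attachment step by an explicit treatment of the reflexive-edge witness along these lines.
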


\begin{proof}
    First we show that all the formulas from Lemma~\ref{lem:phi3frame} are sound. If we have $a,b,c$ such that $R^{\min}(a,b,c)$ then $\forall$ calls $a$ on initialisation and calls the witness $R^{\min}(a,b,c)$ on the $\lambda(x,y)=a$ and $\exists$ must return such a network where $\lambda(x,z) = a, \lambda(y,z) = \hat{c}$ so $R(b,a,\hat{c})$ must hold for consistency and we have (\ref{it:wk3r1}). For (\ref{it:wk3r3}) assume without loss that we have $R(a,\hat{b}, \hat{c})$ so there must be some minimal $\hat{b'} \leq \hat{b}, \hat{c'} \leq \hat{c}$ to call the witness on the initial pair $a$. Observe that for consistency $b \leq b' \leq \hat{c'};\hat{a} \leq \hat{c};\hat{a}$ by monotonicity.
    For (\ref{it:wk3i1}) if $\forall$ initialises with $a$ and calls the $b,c$ witness, $\exists$ needs a join-irreducible $d$ to put on the reflexive edge of the added node.
    
    From Lemma~\ref{lem:phi2frame}, Theorem~\ref{theo:wk2} we have that $\exists$ can survive the initial move and we only need to examine the two possible witness moves, that on a non-reflexive edge in a two-node network and that on a reflexive edge. If a witness move $R^{min}(a,b,c)$ is called on a non-reflexive edge $(x,y)$, check that all Peircian transformations of this triangle hold. By (\ref{it:wk3r1}) we have $R(b,a,\hat{c})$ and through (\ref{it:wk3r3}) we get $R(\hat{b},c,\hat{a}), R(\hat{c},\hat{a},b)$ from $R(a,b,c)$ and $R(\hat{a},\hat{c},\hat{b}), R(c,\hat{b},a)$ from $R(b,a,\hat{c})$. For the reflexive edge on $(z,z)$ you can see that $\exists$ can add $\lambda(z,z) = d$ from(\ref{it:wk3i1}) and by similar reasoning to Theorem~\ref{theo:wk2} all triangles including $(z,z)$ are consistent. Finally let $\lambda(x,x) = d$. By (\ref{it:wk3i2}) $R(b,d,b)$ and by (\ref{it:wk3r3}) $R(\hat{d}=d,b,\hat{b})$. The consistency of other triangles follows from formulas in Lemma~\ref{lem:phi2frame}. Similarly we get consistency for $\lambda(y,y)$. For the reflexive witness $R^{min}(d,a,\hat{a})$ on $(x,x)$ observe due to order reversing of $\hatt$, $\exists$ can either find a join-irreducible $c$ such that $R^{\min}(\lambda(x,y),a,c)$ or $R^{\min}(\lambda(y,x), c, \hat{a})$ and $\exists$ can use the same strategy as for the non-reflexive witness move.  
\qed\end{proof}

To axiomatise the class $\wkRA = \wkRA_4$ we would at least need to add associativity for composition. For $\RA$, it is precisely the axioms for $\RA_3$ and composition that axiomatise $\RA_4$, however, whether this also holds for $\wkRA$ remains open.

\begin{problem}
    What axioms are necessary to axiomatise $\wkRA$? Is it finitely axiomatisable?
\end{problem}

\begin{problem}
    Let $n > 4$. $\RA_n$ is not finitely axiomatisable \cite{hirsch2002relation}. Is the same true for $\wkRA_n$?
\end{problem}

\section{Representable diagonal weakening relation algebras form a discriminator variety}

In this section we define \emph{representable diagonal weakening relation algebras} as those relation algebras where $1$ can be represented as an antichain. Thus in this section when we talk about the concrete binary relation $1$, we mean the diagonal on $X$.
The algebras with this property are the members of $\RwkRA$ that satisfy the identity $1\cdot 0=\bot$.

We show that the simple representable diagonal relation algebras have a discriminator term. A neat consequence is that, unlike representable weakening relation algebras, representable diagonal weakening relation algebras can be defined by an equational theory.

\begin{lemma}\label{discvar}
    For all $R \subseteq X^2$ we have
        $1 \cdot \big(R{;}(R \cdot \simm R)\big) = \bot=
        1 \cdot \big(\simm R{;}(R \cdot \simm R)\big)$.
\end{lemma}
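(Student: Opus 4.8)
The plan is to unfold the concrete meanings of $;$, $\simm$, $\cdot$, $\bot$ and $1$ in $\m{Rel}(X)$ and then chase a hypothetical element of the left-hand side to a contradiction. Recall that in this section $1=id_X$, that $\simm R=\neg R^{\smallsmile}$ (so $(y,x)\in\simm R$ iff $(x,y)\notin R$), that $\cdot=\cap$ and $\bot=\emptyset$. For the first identity I would suppose, for contradiction, that some pair lies in $1\cdot\big(R{;}(R\cdot\simm R)\big)$; since it must lie in $1=id_X$ it has the form $(x,x)$. By the definition of relational composition there is a witness $y\in X$ with $(x,y)\in R$ and $(y,x)\in R\cdot\simm R$. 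Unfolding the second conjunct, $(y,x)\in R$ and $(y,x)\in\simm R$, and the latter means precisely $(x,y)\notin R$, which contradicts $(x,y)\in R$. Hence no such $(x,x)$ exists and $1\cdot\big(R{;}(R\cdot\simm R)\big)=\emptyset=\bot$.

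For the second identity the argument is the mirror image: if $(x,x)\in 1\cdot\big(\simm R{;}(R\cdot\simm R)\big)$ then there is a $y$ with $(x,y)\in\simm R$ and $(y,x)\in R\cdot\simm R$. The first condition gives $(y,x)\notin R$, while the first conjunct of the second condition gives $(y,x)\in R$ — again a contradiction — so this set is empty as well. Both equalities then follow.

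I do not expect any real obstacle here; the statement is a direct element chase. The only thing to be careful about is the bookkeeping for $\simm$, which bundles complement with converse, so that "$(y,x)\in\simm R$" is a statement about the pair $(x,y)$ rather than about $(y,x)$. Once that is kept straight, the two contradictions are immediate. The lemma is isolated because it is the concrete fact underlying the equational behaviour of $1\cdot 0=\bot$ algebras that is exploited in the rest of the section to exhibit a discriminator term.
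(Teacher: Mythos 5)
Your proof is correct and is essentially the paper's own argument: the same element chase showing that a witness $y$ for the composition would have to satisfy both $(x,y)\in R$ and (via $\simm R=\neg R^{\smallsmile}$) $(x,y)\notin R$, with the second equality handled by the mirror-image chase, which the paper also offers as an alternative to the substitution-plus-involution argument.
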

\begin{proof}
    Suppose there exists $(x,x') \in 1 \cdot (R{;}(R \cdot \simm R))$. Because $(x,x') \in 1$ we have $x = x'$. Thus there must exist a $y$ to witness the composition by having $(x,y) \in R, (y,x) \in R \cdot \simm R$. This means that $(x,y) \in R$ and $(y,x) \in \simm R$ and we have reached a contradiction.

    The second equation can be proven by a similar argument or by substitution of $R$ with $\simm R$, the involution law, and the commutativity of meet.
\qed\end{proof}
Let $d_1(R,S) = 1 \cdot \big( R {;}(S \cdot \simm S)\big) \text{ and }
        d_2(R,S) = 1 \cdot \big( \simm S {;}(R \cdot \simm S)\big)$.
\begin{lemma}
    If $R \setminus S\neq\emptyset$ for $R,S \subseteq X^2$ then $d_1(R,S) + d_2(R,S) \neq \bot$.

\end{lemma}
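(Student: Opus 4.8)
The plan is to take a pair $(x,y) \in R \setminus S$ and exhibit concrete witnesses showing at least one of $d_1(R,S)$, $d_2(R,S)$ is nonempty. Since $R,S$ are weakening relations with respect to the diagonal $1$ (the hypothesis is about algebras satisfying $1\cdot 0 = \bot$, so $1$ is represented as the identity relation), the natural move is to ask whether $(y,x) \in \simm S$, i.e.\ whether $(x,y) \notin S$. We are given exactly that $(x,y) \notin S$, hence $(x,y) \in \simm S$. So $(x,y) \in R \cap \simm S = R \cdot \simm S$.

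First I would try to land $(x,x)$ in $d_2(R,S) = 1 \cdot \big(\simm S {;} (R \cdot \simm S)\big)$ using $y$ as the composition witness: we need $(x,y) \in \simm S$ and $(y,x) \in R \cdot \simm S$. We already have $(x,y) \in \simm S$. The remaining obligation is $(y,x) \in R \cdot \simm S$, which need not hold in general. So this single-witness attempt does not immediately close, and the actual content of the lemma is that one must exploit more of the structure --- in particular that $R \cdot \simm S \ne \emptyset$ does not directly give back a point in the right place.

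The cleaner route, which I expect the authors take, is: since $(x,y)\in R$ and $(x,y)\in\simm S$, we have $(x,y) \in R \cdot \simm S$. Now consider $d_1(R,S) = 1\cdot\big(R{;}(S\cdot\simm S)\big)$ and $d_2(R,S) = 1\cdot\big(\simm S{;}(R\cdot\simm S)\big)$; note $R\cdot\simm S$ appears literally in $d_2$. Because $R$ (hence also $\simm S$, via the weakening structure) interacts with the order, one should chase the pair $(x,y)$ through the definition: $(y,y)\in 1$, and to get $(y,y) \in \simm S {;} (R\cdot\simm S)$ we need an intermediate point $z$ with $(y,z)\in\simm S$ and $(z,y)\in R\cdot\simm S$; taking $z = x$ this asks for $(y,x) \in \simm S$ and $(x,y)\in R\cdot\simm S$. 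The second holds; the first, $(y,x)\in\simm S$, means $(x,y)\notin S$ --- wait, $\simm S = \neg S^{\smallsmile}$, so $(y,x)\in\simm S \iff (x,y)\notin S$, which is precisely our hypothesis. Hence $(y,y) \in d_2(R,S)$, so $d_2(R,S) \ne \bot$, and therefore $d_1(R,S)+d_2(R,S)\ne\bot$.

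The main obstacle is getting the converse/complement bookkeeping exactly right: $\simm T = \neg T^{\smallsmile}$ mixes complement and converse, so one must be careful that $(u,v)\in\simm S$ unpacks to $(v,u)\notin S$, and that the composition witness is placed on the correct side. Once that is pinned down the argument is a two-line witness chase, and I would present it as: fix $(x,y)\in R\setminus S$; then $(x,y)\in\simm S$ and $(x,y)\in R$, so $(x,y)\in R\cdot\simm S$; also $(y,x)\in\simm S$ since $(x,y)\notin S$; therefore $(y,y)\in 1\cdot\big(\simm S{;}(R\cdot\simm S)\big) = d_2(R,S)$, giving $d_1(R,S)+d_2(R,S)\supseteq\{(y,y)\}\ne\emptyset$.
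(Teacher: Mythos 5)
There is a genuine gap. Your argument hinges on the claim that $(x,y)\in R\setminus S$ gives $(x,y)\in\simm S$, but by your own unpacking $(u,v)\in\simm S$ means $(v,u)\notin S$, so $(x,y)\in\simm S$ means $(y,x)\notin S$ --- and the hypothesis only gives $(x,y)\notin S$, i.e.\ $(y,x)\in\simm S$. You flagged exactly this bookkeeping danger and then committed it: your final chase needs both $(y,x)\in\simm S$ (true, from $(x,y)\notin S$) and $(x,y)\in R\cdot\simm S$, and the latter requires $(y,x)\notin S$, which nobody gave you. When $(y,x)\in S$ the conclusion you draw can actually fail: take $X=\{x,y\}$ with the discrete order, $R=\{(x,y)\}$, $S=\{(y,x)\}$; then $\simm S=X^2\setminus\{(x,y)\}$, so $R\cdot\simm S=\emptyset$ and $d_2(R,S)=\bot$, even though $R\setminus S\neq\emptyset$ (here $d_1(R,S)=\{(x,x)\}$ saves the lemma).

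This is precisely why the paper's proof splits into two cases according to whether $(y,x)\in S$. In the case $(y,x)\notin S$ your witness chase is correct and yields $(y,y)\in d_2(R,S)$; this coincides with the paper's second case. In the case $(y,x)\in S$ you must use $d_1$ instead: then $(y,x)\in S$ and also $(y,x)\in\simm S$ (because $(x,y)\notin S$), so $(y,x)\in S\cdot\simm S$, and since $(x,y)\in R$ the intermediate point $y$ gives $(x,x)\in 1\cdot\big(R{;}(S\cdot\simm S)\big)=d_1(R,S)$. So as written you have proved only half of the lemma --- and, as the counterexample shows, the stronger claim you actually assert (that $d_2(R,S)$ alone is always nonempty) is false; adding the case distinction and the $d_1$ branch completes the proof.
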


\begin{proof}
    Assume $(x,y)\in R\setminus S$ and consider the two cases, $(y,x) \in S$ and $(y,x) \notin S$. In the first case, because $(x,y) \notin S$ we also have $(y,x) \in \simm S$ and consequently $(y,x) \in S \cdot \simm S$. Hence $(x,x) \in R{;}(S \cdot \simm S)$ and also by definition in $1$ and thus $(x,x) \in d_1(R,S)$.

    In the second case $(y,x) \notin S$ and therefore $(x,y) \in \simm S$. Because $(x,y) \notin S$, $(y,x) \notin \simm S$. By composition $(y,y) \in \simm S;(R \cdot \simm S)$ and by reflexivity of $1$ we also have $(y,y) \in 1 \cdot \big( \simm S {;}(R \cdot \simm S)\big)$.

    In either case we have that at least one of $d_1(R,S), d_2(R,S)$ is nonempty and thus their join is always nonempty given $R\setminus S\ne\emptyset$.
\qed\end{proof}

\begin{theorem}
    Simple diagonal weakening relation algebras have a term $d(a,b,c)$ such that
    $$d(a,b,c) = \begin{cases}
        c & \mbox{if } a = b\\
        a & \mbox{otherwise}
    \end{cases}$$
\end{theorem}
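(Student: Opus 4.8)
The plan is to build $d$ from the two lemmas above by first constructing an ``equality test'' term and then amplifying it using simplicity. Writing $d_1,d_2$ as defined just after Lemma~\ref{discvar}, set $e(a,b)=d_1(a,b)+d_2(a,b)+d_1(b,a)+d_2(b,a)$. I claim $e(a,b)=\bot$ exactly when $a=b$, and $e(a,b)\le 1$ always. The inequality $e(a,b)\le 1$ is immediate since each $d_i$-term has the form $1\cdot(\cdots)$. If $a=b$, Lemma~\ref{discvar} with $R=a$ gives $d_1(a,a)=d_2(a,a)=\bot$, so $e(a,a)=\bot$. Conversely, if $a\ne b$ then, passing to a representation (the algebra being a representable diagonal weakening relation algebra), $h(a)\setminus h(b)\ne\emptyset$ or $h(b)\setminus h(a)\ne\emptyset$; applying the lemma immediately following Lemma~\ref{discvar} with $(R,S)=(h(a),h(b))$ or $(h(b),h(a))$ respectively shows that one of $d_1(a,b)+d_2(a,b)$, $d_1(b,a)+d_2(b,a)$ is nonzero, so $e(a,b)\ne\bot$.

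Next I would use simplicity to turn $e(a,b)$ into a $\{\bot,\top\}$-valued term. Since a simple algebra is subdirectly irreducible, take any representation $h\colon\A\to\m{wk}(\m X)$; the equivalence relation $h(\top)$ partitions $X$ into blocks $X_i$, and restricting each relation in $h[\A]$ to the individual $X_i^2$ exhibits $\A$ as a subdirect product of subalgebras of the $\m{wk}(\m X_i)$ (composition and $\simm$ respect the blocks because $h(s{;}t)\le h(\top){;}h(\top)=h(\top)=\bigcup_i X_i^2$), so one of these projections is already injective. Hence we may assume $h(\top)=X^2$. Then for $z\ne\bot$ we get $\top{;}z{;}\top=\top$ (choose $(u,v)\in h(z)$; every pair of $X$ factors through $u,v$), and $\top{;}z{;}\top=\bot$ for $z=\bot$. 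This reduction — that a simple representable weakening relation algebra may be represented with $h(\top)=X^2$, exactly as for $\RRA$ — is the main point to get right; once it is in hand, the rest is routine.

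Finally I assemble the term. Put $f(a,b)=\top{;}e(a,b){;}\top$ and $g(a,b)=1\cdot\simm f(a,b)$. Since $\simm$ is an order-reversing involution we have $\simm\bot=\top$ and $\simm\top=\bot$, so the previous two paragraphs give $(f(a,b),g(a,b))=(\bot,1)$ when $a=b$ and $(\top,\bot)$ when $a\ne b$. Now define
\[
d(a,b,c)=\bigl(a\cdot f(a,b)\bigr)+\bigl(g(a,b){;}c\bigr).
\]
If $a=b$ this is $(a\cdot\bot)+(1{;}c)=\bot+c=c$, using the unit law $1{;}c=c$; if $a\ne b$ it is $(a\cdot\top)+(\bot{;}c)=a+\bot=a$, using $\bot{;}c=\bot$. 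Thus $d$ behaves as required, and the diagonal hypothesis $1\cdot 0=\bot$ enters only through Lemma~\ref{discvar}. The fact that the class is a discriminator variety then follows in the usual way from the existence of $d$ together with the quasivariety being closed under the relevant operations.
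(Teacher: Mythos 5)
Your proof is correct and takes essentially the same route as the paper: both use the two lemmas to form the equality test $d_1+d_2$, amplify it to a $\{\bot,\top\}$-valued term via $\top{;}(\cdot){;}\top$ using simplicity, and assemble the discriminator term (your variant $(a\cdot f)+\bigl((1\cdot\simm f){;}c\bigr)$ versus the paper's $f\cdot a+\simm f\cdot c$ is an inessential difference). Your square-representation argument justifying $\top{;}z{;}\top\in\{\bot,\top\}$ simply fills in a step the paper declares ``easy to see.''
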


\begin{proof}
    It is easy to see that in simple weakening relation algebras $\top;s;\top=\top$ if $s \neq \bot$ and $\top;s;\top=\bot$ otherwise. By the lemmas above, we have for representable simple algebras that $a = b$ if and only $d_1 + d_2 = \bot$, where $d_i=d_i(a,b)+d_i(b,a)$ for $i=1,2$. Thus $d(a,b,c) = \top;(d_1 + d_2);\top \cdot a + \simm (\top;(d_1 + d_2);\top) \cdot c$ will equal to $c$ if $a=b$ and $a$ otherwise.
\qed\end{proof}

There are several ways to prove the following corollary. We outline the argument that generates a recursively enumerable equational theory.

\begin{corollary}
    Representable diagonal weakening relation algebras form a discriminator variety.
\end{corollary}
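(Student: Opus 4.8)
The plan is to realise the class $\mathsf{RDwkRA}$ of representable diagonal weakening relation algebras (the members of $\RwkRA$ satisfying $1\cdot 0=\bot$) as a discriminator variety generated by simple algebras carrying the ternary discriminator term $d$ from the preceding theorem, and then to read off a recursively enumerable \emph{equational} axiomatisation from the recursively enumerable \emph{first-order} one already available. The starting observation is that $\mathsf{RDwkRA}=\mathbb{ISP}(\mathcal K)$, where $\mathcal K=\{\mathbf{Rel}(X)^-\mid X\text{ a set}\}$ and $\mathbf{Rel}(X)^-:=\m{wk}(\m X)$ for $\m X$ the antichain on $X$, i.e. the algebra $(\mathcal P(X^2),\cap,\cup,\emptyset,X^2,;,id_X,\simm)$ with $\simm R=\neg R^{\smallsmile}$. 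Indeed, a member of $\RwkRA$ satisfying $1\cdot 0=\bot$ is exactly a subalgebra of a product of such algebras, since $1\cdot 0=\bot$ forces $\le$ to be a disjoint union of antichains in any representation; and each $\mathbf{Rel}(X)^-$ is itself in $\mathsf{RDwkRA}$.

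Next I would note that the \emph{proof} of the preceding theorem used only that $\top;R;\top\in\{\bot,\top\}$ together with the two lemmas preceding it, all of which hold in $\mathbf{Rel}(X)$; hence $d$ is literally the ternary discriminator on every $\mathbf{Rel}(X)^-$, and in particular each $\mathbf{Rel}(X)^-$ is simple. So $\mathcal K$ is a class of algebras with a common discriminator term, whence the variety $\mathbb V(\mathcal K):=\mathbb{HSP}(\mathcal K)$ it generates is a discriminator variety with discriminator term $d$; since $\mathcal K\subseteq\mathsf{RDwkRA}\subseteq\mathbb V(\mathcal K)$ we get $\mathbb V(\mathcal K)=\mathbb{HSP}(\mathsf{RDwkRA})$.

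To see that $\mathbb V(\mathcal K)$ does not overshoot $\mathsf{RDwkRA}$ I would invoke the structure theory of discriminator varieties: the subdirectly irreducible members of $\mathbb V(\mathcal K)$ are simple and lie, up to isomorphism, in $\mathbb{ISP}_U(\mathcal K)$. Ultraproducts of the $\mathbf{Rel}(X)$ are again in $\RRA$ (a variety, hence closed under ultraproducts), their weakening-algebra reducts satisfy $1\cdot 0=\bot$ because $0=\simm 1=\neg 1^{\smallsmile}=\neg 1$, and those reducts lie in $\RwkRA$; so $\mathbb{ISP}_U(\mathcal K)\subseteq\mathsf{RDwkRA}$, and since $\mathsf{RDwkRA}$ is a quasivariety it is closed under the remaining $\mathbb{IS}$. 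Every member of $\mathbb V(\mathcal K)$ is a subdirect product of its subdirectly irreducible quotients, all of which therefore lie in $\mathsf{RDwkRA}$, so the member lies in $\mathbb{ISP}(\mathsf{RDwkRA})=\mathsf{RDwkRA}$. Hence $\mathsf{RDwkRA}=\mathbb V(\mathcal K)$ is a discriminator variety. Finally, $\mathsf{RDwkRA}$ is axiomatised by a recursively enumerable first-order theory (the $d\ell$-magma axioms, the sentences $\sigma_n$ of Proposition~\ref{prop:fmlasgame}, and $1\cdot 0=\bot$; cf.\ Corollary~\ref{cor:fmlasgame}), so the equations derivable from it form a recursively enumerable set, and since $\mathsf{RDwkRA}$ is a variety that set already axiomatises it.

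The step I expect to be the main obstacle is ruling out that $\mathbb V(\mathcal K)$ is strictly larger than $\mathsf{RDwkRA}$: because weakening relation algebras have only a distributive, not Boolean, lattice reduct, one cannot run the relation-algebra argument that nontrivial congruences always reach $\bot$, so subdirectly irreducibles are not obviously simple; the argument must route through the genuinely simple generators $\mathbf{Rel}(X)^-$, the description of the subdirectly irreducibles of a discriminator variety generated by a class, and the closure of $\RRA$ (hence of $\mathsf{RDwkRA}$) under ultraproducts. Checking that the proof of the preceding theorem applies verbatim to the full algebras $\mathbf{Rel}(X)^-$, and not only to algebras already known to be simple, is the small technical point on which this hinges.
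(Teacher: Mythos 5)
Your proof is correct, but it takes a genuinely different route from the paper. The paper works syntactically: it extends the representation game by one extra move forcing $1\in\lambda(y,x)$ whenever $1\in\lambda(x,y)$, obtains from it a recursively enumerable set of universal sentences as in Proposition~\ref{prop:fmlasgame}, and then uses the discriminator term $d$ together with the unary switching term $\top;a;\top$ to translate each such sentence into an equation (equations $t=t'$ become $\top;a;\top\cdot\simm d(t,t',\top;a;\top)$, negation and disjunction are handled by $\simm$ relative to $\top;a;\top$ and by join); this exhibits an explicit r.e.\ equational basis, which simultaneously shows the class is a variety and a discriminator one. You instead argue structurally: you identify the class as $\mathbb{ISP}$ of the full antichain algebras $\m{wk}(X,=)$ (re-deriving the paper's identification of diagonal representability with $1\cdot 0=\bot$), check that the proof of the preceding theorem makes $d$ a genuine discriminator on these generators, and then use the standard structure theorem that the subdirectly irreducibles of the generated variety lie in $\mathbb{ISP}_U$ of the generators, together with closure of $\RRA$ (hence of the antichain reducts) under ultraproducts and the $\mathbb{SP}$-closure of the class, to show $\mathbb{HSP}$ does not overshoot; the r.e.\ equational axiomatisation is then obtained abstractly from the r.e.\ first-order one rather than constructed. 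What each approach buys: the paper's argument produces concrete equations but is sketchy about why the class is closed under homomorphic images, whereas your argument settles that point cleanly at the price of invoking Tarski's theorem that $\RRA$ is a variety and the Burris--Sankappanavar-style description of subdirectly irreducibles in discriminator varieties, and of losing explicitness of the equations. Only cosmetic caveats: handle the trivial algebra (empty base set) separately when asserting the generators are simple, and note that ``$\le$ is a disjoint union of antichains'' really means each $h(\top)$-class is discrete, i.e.\ the base poset of each factor is an antichain.
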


\begin{proof}
    The representation game defined for weakening relation algebras only needs an additional move where $\exists$ is requested add $1$ to $\lambda(y,x)$ if $1 \in \lambda(x,y)$ and this game gives rise to a similar style of a recursive axiomatisation as presented in Proposition~\ref{prop:fmlasgame}. If all variables are given unique names, the universal quantifiers can also be moved to the begining of all these formulas. Observe that although these formulas apply to all algebras, the game is played on the homomorphic image of the algebra where $\top$ maps to $\top;a;\top$ where $\sigma_n = s \nleq t \Rightarrow (\phi_n(\N^{1,s,t}) \vee \phi_n(\N^{2,s,t}))$. Thus we can construct a term from any universally quantified first order formula that is equal to $\top;a;\top$ if and only if the formula is true and $\bot$ otherwise. For equations $t = t'$ we take $\top;a;\top \cdot \simm d(t,t',\top;a;\top)$. If a term $t$ corresponds to a formula, then $\simm t \cdot \top;a;\top$ corresponds to its negation and for disjunctions we can take the join of the corresponding terms. Thus every formula $\sigma_n$ has an equivalent equation.
\qed\end{proof}

\section{Representing associative members of $\wkRA_3$ with weakening relations}\label{smallRwkRA}

Sugihara monoids are commutative distributive idempotent involutive residuated lattices. This variety is semilinear, i.e., generated by linearly ordered algebras, and the structure of these algebras is well known.
In particular, the Sugihara monoid $S_n$
is a chain with $n$ elements $\{a_{-k}, a_{-k+1}, \dots, a_{-1}, a_0, a_1, \dots, a_{k-1}, a_k\}$ if $n=2k+1$ is odd, and otherwise for even $n$, $S_n=S_{n+1}\setminus\{a_0\}$. The involution operation is given by $\simm a_i=a_{-i}$ and the
multiplication is $a_i{;} a_j=a_{-\max{|i|,|j|}}$. It follows that in the odd case the identity element is $1=a_0$ and in the even case it is $1=a_1$.

Note that $S_2$ is the 2-element Boolean algebra and that for even $n$, there is a surjective homomorphism from $S_n$ to $S_{n-1}$ that identifies $a_1$ and $a_{-1}$.

It is proved in \cite{Mad2010} that the even Sugihara chains can be represented by algebras of weakening relations. For $S_2$ this is clear since $S_2\cong\text{Rel(1)}$. For $S_4$ an infinite base set is needed with a dense order. E.g., we can take $(\mathbb Q,\le)$ be the poset of rational numbers with the standard order and check that $S_4\cong\{\emptyset,<,\le,\mathbb Q^2\}$ is a representation in Wk$(\mathbb Q,\le)$.







It follows from the consistency of networks that no nontrivial member of $\wkRA_2$ has an element that satisfies $a=\simm a$. Hence 
any finite member of $\wkRA$ has an even number of elements. In particular, the odd Sugihara chains do not have a representation by weakening relations. However they are in the variety generated by all algebras of weakening relations since they are homomorphic images of even Sugihara chains. This shows that $\RWkRA$ is not closed under homomorphic images, so it is a proper quasivariety.


Let $\mathbf{2}=\{0,1\}$ be the two element chain with $0<1$. The algebra $\m{wk}(\m 2)$ is shown in Figure~\ref{6eltwkRA3}, and it has the following six elements: $\emptyset$, $\{(0,1)\}$, $\{(0,0),(0,1)\}$,  $\{(0,1),(1,1)\}$, $\le$, $\m 2\times\m 2$.


The \emph{point algebra} $\m P$ shown in Figure~\ref{pointalg} (see also \cite{Hir1996}) is a representable relation algebra with 3 atoms $id_{\mathbb{Q}},<,>$
where $<$ is the strict order on the rational numbers $\mathbb{Q}$.
It has two weakening subalgebras: $\mathbf{S}_{4}=\{\emptyset,<,\le,\top\}$
and $\mathbf{W}_{6,1}=\{\emptyset,id_{\mathbb{Q}},<,\le,{<}{\cup}{>},\top\}$.
Like the point algebra, both of these algebras can only be represented on an infinite
set.
Note that $\mathbf{W}_{6,1}$ is diagonally representable, while
$\mathbf{S}_{4}$ is not.
\begin{figure}
\begin{center}\scriptsize
\begin{tikzpicture}[scale=.7, every node/.style={circle, draw, fill=white, inner sep=2pt}]
\draw(0,0)--(-1,1)--(-1,0)node[label=left:\protect{\small$id_\mathbb Q$}]{}--(0,-1)--(1,0)node[label=right:\protect{\small$>$}]{}--(1,1)--(0,2)
(1,0)--(0,1)--(-1,0)(0,1)node[label=right:\protect{\small$\!\ge$}]{}--(0,2)
(0,-1)node[label=right:\protect{$\emptyset$}, label=below:$\mathbf P$]{}--
(0,0)node[label=right:\protect{\small$\!<$}]{}--
(1,1)node[label=right:\protect{\small${<}{\cup}{>}$}]{}
(-1,1)node[label=left:\protect{\small$\le$}]{}--
(0,2)node[label=right:\protect{\small${\top=\mathbb Q^2}$}]{};
\end{tikzpicture}
\qquad
\begin{tikzpicture}[scale=.7, every node/.style={circle, draw, fill=white, inner sep=2pt}]
\draw(0,-1)node[label=right:\protect{$\emptyset$}, label=below:$\mathbf S_4$]{}--
(0,0)node[label=right:\protect{\small$<$}]{}--
(0,1)node[label=right:\protect{\small${\le}={\sim}{<}$}]{}--
(0,2)node[label=right:\protect{\small${\top=\mathbb Q^2}$}]{};
\end{tikzpicture}
\qquad
\begin{tikzpicture}[scale=.7, every node/.style={circle, draw, fill=white, inner sep=2pt}]
\draw(0,0)--(-1,1)--(-1,0)node[label=left:\protect{\small$id_\mathbb Q$}]{}--(0,-1)(1,1)--(0,2)
(-1,0)(0,2)
(0,-1)node[label=right:\protect{$\emptyset$}, label=below:$\mathbf W_{6,1}$]{}--
(0,0)node[label=right:\protect{\small${<}={\sim}{\le}$}]{}--
(1,1)node[label=right:\protect{\small${<}{\cup}{>}={\sim}id_\mathbb Q$}]{}
(-1,1)node[label=left:\protect{\small$\le$}]{}--
(0,2)node[label=right:\protect{\small${\top=\mathbb Q^2}$}]{};
\end{tikzpicture}
\end{center}
\caption{The point algebra $\mathbf P$, the weakening subalgebra $\mathbf{S}_{4}$ and the diagonally representable weakening subalgebra $\mathbf{W}_{6,1}$.}\label{pointalg}
\end{figure}
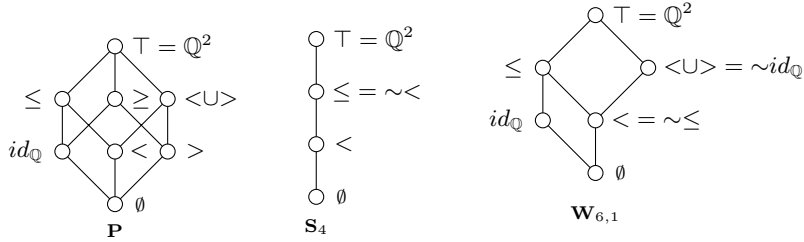

Since $\wkRA_3$ is finitely axiomatised, one can use a model finder such as Mace4 \cite{P9} to compute all members of cardinality $n$ for small values of $n$.  Up to isomorphism there are 14 algebras with 6 elements or fewer in $\wkRA_3$ such that $;$ is associative, shown in Figure~\ref{6eltwkRA3}.
We now briefly describe their representations by weakening relations.

\begin{figure}
\tikzstyle{every picture} = [scale=.5] 
\begin{center}
\begin{tikzpicture}[baseline=0pt]
\node at (0,-1.5)[n]{$\m 1$};
\node(0) at (0,0)[i,label=left:{$0=1$}]{};
\end{tikzpicture}
\ 
\begin{tikzpicture}[baseline=0pt]
\node at (0,-1.5)[n]{$\m 2$};
\node(1) at (0,1)[i,label=right:$1$]{};
\node(0) at (0,0)[i,label=right:$0$]{} edge (1);
\end{tikzpicture}
\ 
\begin{tikzpicture}[baseline=0pt]
\node at (0,-1.5)[n]{$\m 2^2$};
\node(3) at (0,2)[i,label=right:$1$]{};
\node(2) at (1,1)[i,label=left:$\simm a$]{} edge (3);
\node(1) at (-1,1)[i,label=right:$a$]{} edge (3);
\node(0) at (0,0)[i,label=right:$0$]{} edge (1) edge (2);
\end{tikzpicture}
\ 
\begin{tikzpicture}[baseline=0pt]
\node at (0,-1.5)[n]{$\m A_2$};
\node(3) at (0,2)[i]{};
\node(2) at (1,1)[label=left:$0$]{} edge (3);
\node(1) at (-1,1)[i,label=left:$1$][label=right:$0^2_{\ }$]{} edge (3);
\node(0) at (0,0)[i]{} edge (1) edge (2);
\end{tikzpicture}
\ 
\begin{tikzpicture}[baseline=0pt]
\node at (0,-1.5)[n]{$\m A_3$};
\node(3) at (0,2)[i,label=right:$0^2_{\ }$]{};
\node(2) at (1,1)[label=left:$0$]{} edge (3);
\node(1) at (-1,1)[i,label=right:$1$]{} edge (3);
\node(0) at (0,0)[i]{} edge (1) edge (2);
\end{tikzpicture}
\ 
\begin{tikzpicture}[baseline=0pt]
\node at (0,-1.5)[n]{$\m S_{4}$};
\node(3) at (0,3)[i]{};
\node(2) at (0,2)[i,label=right:$1$]{} edge (3);
\node(1) at (0,1)[i,label=right:$0$]{} edge (2);
\node(0) at (0,0)[i]{} edge (1);
\end{tikzpicture}

\begin{tikzpicture}[baseline=0pt]
\node at (0,-1.5)[n]{$\m W_{6,1}$};
\node(5) at (0.5,3)[i,label=right:$0^2_{\ }$]{};
\node(4) at (1.5,2)[label=left:$0$]{} edge (5);
\node(3) at (-0.5,2)[i,label=left:$\simm a$]{} edge (5);
\node(2) at (0.5,1)[i,label=right:$a$]{} edge (3) edge (4);
\node(1) at (-1.5,1)[i,label=right:$1$]{} edge (3);
\node(0) at (-0.5,0)[i]{} edge (1) edge (2);
\end{tikzpicture}
\ 
\begin{tikzpicture}[baseline=0pt]
\node at (0,-1.5)[n]{$\m W_{6,2}$};
\node(5) at (0.5,3)[i,label=right:$0^2_{\ }$]{};
\node(4) at (1.5,2)[label=left:$0$]{} edge (5);
\node(3) at (-0.5,2)[label=left:$\simm a$]{} edge (5);
\node(2) at (0.5,1)[label=right:$a$]{} edge (3) edge (4);
\node(1) at (-1.5,1)[i,label=right:$1$]{} edge (3);
\node(0) at (-0.5,0)[i]{} edge (1) edge (2);
\end{tikzpicture}
\
\begin{tikzpicture}[baseline=0pt]
\node at (0,-1.5)[n]{$\m W_{6,3}$};
\node(5) at (0,4)[i]{};
\node(4) at (0,3)[i,label=right:$1$]{} edge (5);
\node(3) at (1,2)[i,label=left:$\simm a$]{} edge (4);
\node(2) at (-1,2)[i,label=right:$a$]{} edge (4);
\node(1) at (0,1)[i,label=right:$0$]{} edge (2) edge (3);
\node(0) at (0,0)[i]{} edge (1);
\end{tikzpicture}
\
\begin{tikzpicture}[baseline=0pt]
\node at (0,-1.5)[n]{$\m W_{6,4}$};
\node(5) at (0,4)[i]{};
\node(4) at (0,3)[i,label=right:$\simm a$]{} edge (5);
\node(3) at (1,2)[label=left:$0$]{} edge (4);
\node(2) at (-1,2)[i,label=right:$0^2_{\ }$][label=left:$1$]{} edge (4);
\node(1) at (0,1)[i,label=right:$0{;}a$][label=left:$a$]{} edge (2) edge (3);
\node(0) at (0,0)[i]{} edge (1);
\end{tikzpicture}
\
\begin{tikzpicture}[baseline=0pt]
\node at (0,-1.5)[n]{$\m W_{6,5}$};
\node(5) at (0,4)[i]{};
\node(4) at (0,3)[i,label=right:$\simm a$][label=left:$0^2_{\ }$]{} edge (5);
\node(3) at (1,2)[label=left:$0$]{} edge (4);
\node(2) at (-1,2)[i,label=right:$1$]{} edge (4);
\node(1) at (0,1)[i,label=right:$0{;}a$][label=left:$a$]{} edge (2) edge (3);
\node(0) at (0,0)[i]{} edge (1);
\end{tikzpicture}
\
\begin{tikzpicture}[baseline=0pt]
\node at (0,-1.5)[n]{$\m W_{6,6}$};
\node(5) at (0,4)[i,label=right:$0^2_{\ }$]{};
\node(4) at (0,3)[label=right:$\simm a$]{} edge (5);
\node(3) at (1,2)[label=left:$0{;}a$][label=right:$0$]{} edge (4);
\node(2) at (-1,2)[i,label=right:$1$]{} edge (4);
\node(1) at (0,1)[i,label=right:$a$]{} edge (2) edge (3);
\node(0) at (0,0)[i]{} edge (1);
\end{tikzpicture}
\
\begin{tikzpicture}[baseline=0pt]
\node at (0,-1.5)[n]{$\m{wk}(\m 2)$};
\node(5) at (0,4)[i]{};
\node(4) at (0,3)[i,label=right:$1$]{} edge (5);
\node(3) at (1,2)[i,label=left:$\simm a$]{} edge (4);
\node(2) at (-1,2)[i,label=right:$a$]{} edge (4);
\node(1) at (0,1)[label=right:$0$]{} edge (2) edge (3);
\node(0) at (0,0)[i][label=right:$0^2_{\ }$]{} edge (1);
\end{tikzpicture}
\begin{tikzpicture}[baseline=0pt]
\node at (0,-1.5)[n]{$\m S_{6}$};
\node(5) at (0,5)[i]{};
\node(4) at (0,4)[i][i,label=right:$\simm a$]{} edge (5);
\node(3) at (0,3)[i,label=right:$1$]{} edge (4);
\node(2) at (0,2)[i,label=right:$0$]{} edge (3);
\node(1) at (0,1)[i,label=right:$a$]{} edge (2);
\node(0) at (0,0)[i]{} edge (1);
\end{tikzpicture}
\end{center}
\caption{\ All algebras in $\wkRA_4$ up to 6 elements. Black nodes denote idempotent elements ($x{;}x=x$) and $0^2=0{;}0$.}
\label{6eltwkRA3}
\end{figure}
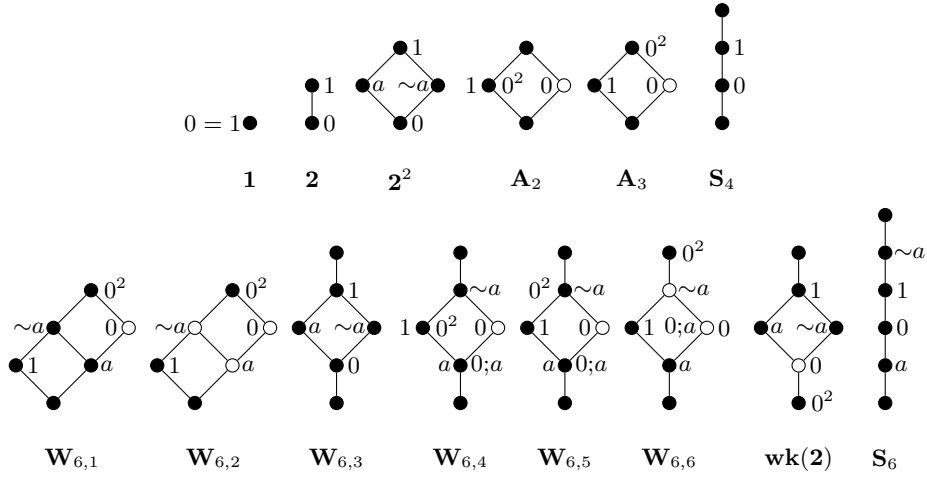

The first 5 are symmetric representable relation algebras, hence they are diagonally representable weakening relation algebras. 

As mentioned above, the Sugihara algebra $\m S_4$ and the algebra $\m W_{6,1}$ are representable as subalgebras of 
the $\sim$-reduct of the point algebra (Figure~\ref{pointalg}). The algebra $\m W_{6,2}$ is representable as $\sim$-subreduct of the complex algebra of $\mathbb Z_7$, where the element $a=\{1,2,4\}$ and $1=\{0\}$.

$\m W_{6,3}$ is subdirectly embedded in a direct product of two copies of $\m S_4$, hence it is representable over the union of two disjoint copies of $\mathbb Q$.

Similarly $\m W_{6,4}$ is represented over $X=(\{0\}\times\mathbb Q)\cup(\{1\}\times\mathbb Q)$ with order $(i,p)\le(j,q)\iff p < q \text{ or } p = q, i = j$. The identity $1$ maps to $\le$ and the element $a$ maps to the relation $\{((i,p),(i,q))\mid i=0,1,\, p<q\}$.

The representation of $\m W_{6,5}$ requires the union of $\{i\}\times\mathbb Q$ for $i\in \{0,1,2\}$. The partial order $\le$ is defined in the same way and $a$ is mapped to the relation $\{((i,p),(i,q))\mid i=0,1,2,\, p<q\}$.

Finally $\m W_{6,6}$ is represented over $X=(\{0\}\times\mathbb Q)\cup(\{1\}\times\mathbb Q)$ with order $(i,p)\le(j,q)\iff i=j \text{ and }p\le q$. The identity $1$ maps to $\le$ and the element $a$ maps to the relation $\{((i,p),(i,q))\mid i=0,1,\, p<q\}$.

We gratefully acknowledge a very useful conversation with Roger Maddux regarding relevance frames, relevance logic and its connections with relation algebras. In particular, formulas (3.101), (3.102) in \cite{Mad2022} provided key insights into the axiomatisation of $\wkRA_3$.

\bibliographystyle{splncs04.bst}
\bibliography{ref}

\end{document}